\def\@acknow{}%
\long\def\EarlyAcknow#1 \par{%
\def\@acknow{\abstractfont\abstracthead*{Acknowledgments}%
#1\par}}%
\def\@acknow{}%
\long\def\EarlyAcknow#1 \par{%
\def\@acknow{\abstractfont\abstracthead*{Acknowledgments}%
#1\par}}%
\def\printabstract{\ifx\@acknow\empty\else\@acknow\fi\par%
    \ifx\@abstract\empty\else\@abstract\fi\par}
\let\originalleft\left
\let\originalright\right
\renewcommand{\left}{\mathopen{}\mathclose\bgroup\originalleft}
\renewcommand{\right}{\aftergroup\egroup\originalright}
\renewcommand{\Pr}{\operatorname*{\textbf{\textup{Pr}}}}
\DeclareMathOperator*{\EE}{\textbf{\textup{E}}}
\newcommand{\set}[1]{\{ #1 \}}
\newcommand{\lt}{\left}
\newcommand{\rt}{\right}
\newcommand{\md}{\middle}
\DeclareMathOperator{\poly}{\ensuremath{\mathrm{poly}}}
\newcommand{\congest}{\ensuremath{\mathsf{CONGEST}}}
\newcommand{\ann}[1]{%
\text{\footnotesize(#1)}\quad}
\newcommand{\local}{\ensuremath{\mathsf{LOCAL}}}
\newcommand{\Inp}{\ensuremath{\mathsf{Inp}}}
\newcommand{\rew}{\ensuremath{\mathsf{Rew}}}
\newcommand{\AllDetect}{\ensuremath{\text{$\mathsf{All}$-$\mathsf{Detect}$}}}
\newcommand{\OneDetects}{\ensuremath{\text{$\mathsf{One}$-$\mathsf{Detects}$}}}
\newcommand{\Found}{\ensuremath{\mathsf{Found}}}
\DeclareMathOperator{\A}{\mathcal{A} }
\newcommand{\ktone}{\ensuremath{\mathsf{KT_1}}}
\newcommand{\ktx}{\ensuremath{\mathsf{KT_\rho}}}
\newcommand{\kt}{\ensuremath{\mathsf{KT}}}
\DeclareMathOperator{\id}{\mathsf{id}}
\DeclareMathOperator{\dist}{\mathsf{dist}}
\DeclareMathOperator{\diam}{\mathsf{Diam}}
\DeclareMathOperator{\Diam}{\mathsf{Diam}}
\DeclareMathOperator{\econn}{\stackrel{\text{$e$}}{\text{---}}}
\newcommand{\ktzero}{\ensuremath{\mathsf{KT_0}}}
\def\peter#1{{\underline{\bf{Peter:}}} { [\color{blue} {#1}]}}
\def\mingming#1{{\underline{\bf{Ming Ming:}}} {\color{purple} [{#1}]}}
\newcommand{\onlyShort}[1]{\ifthenelse{\boolean{short}}{#1}{}}
\newcommand{\onlyLong}[1]{\ifthenelse{\boolean{short}}{}{#1}}
\theoremstyle{thmstyleone}%
\newtheorem{theorem}{Theorem}%
\newtheorem{lemma}[theorem]{Lemma} 
\newtheorem{corollary}[theorem]{Corollary}
\theoremstyle{thmstyletwo}%
\theoremstyle{thmstylethree}%
\newtheorem{definition}{Definition}%
\newtheorem{claim}{Claim}
\begin{document}
\title[Tight Bounds on the Message Complexity of Distributed Tree Verification]{Tight Bounds on the Message Complexity of Distributed Tree Verification} 

 \author[1]{\fnm{Shay} \sur{Kutten}}
\author[2]{\fnm{Peter} \sur{Robinson}}

\author*[2]{\fnm{Ming Ming} \sur{Tan}}\email{mtan@augusta.edu}

\affil[1]{\orgdiv{Technion - Israel Institute of Technology}, \country{Israel}}

\affil[2]{\orgdiv{School of Computer \& Cyber Sciences}, \orgname{Augusta University}, \orgaddress{\city{Augusta}, \state {Georgia}, \country{USA}}}

\def\peter#1{{\underline{\bf{Peter:}}} { [\color{blue} {#1}]}}
\def\mingming#1{{\underline{\bf{Ming Ming:}}} {\color{purple} [{#1}]}}

\EarlyAcknow{
Shay Kutten was supported in part by grant 2070442 from the ISF.
Peter Robinson was supported in part by National Science Foundation (NSF) grant CCF-2402836.
Ming Ming Tan was supported in part by National Science Foundation (NSF) grant CCF-2348346.
}

\abstract{
We consider the message complexity of  verifying whether a given subgraph of the communication network forms a tree with specific properties both in the  $\kt_\rho$ (nodes know their $\rho$-hop neighborhood, including node $\id$s) and the $\ktzero$ (nodes do not have this knowledge)
models. 
We develop a rather general framework that helps in establishing tight lower bounds for various tree verification problems. We also consider two different verification requirements: namely that \emph{every} node detects in the case the input is incorrect, as well as the requirement that \emph{at least one} node detects.
The results are stronger than previous ones in the sense that
 we assume that each node knows the number $n$ of nodes in the graph (in some cases) or an $\alpha$ approximation of $n$ (in other cases). 
For spanning tree verification, 
we show that the message complexity inherently depends on the quality of the given approximation of $n$:  
We show a tight lower bound of $\Omega(n^2)$ for the case $\alpha \ge \sqrt{2}$ and a much better upper bound (i.e., $O(n \log n)$) when nodes are given a tighter approximation.
On the other hand, even for the case when nodes have perfect knowledge of the network size, our framework yields an $\Omega(n^2)$ lower bound on the message complexity of verifying 
a minimum spanning tree (MST). 
This result holds for randomized algorithms with perfect knowledge of the network size, and even when just one node detects illegal inputs, thus improving over the work of Kor, Korman, and Peleg (2013).
Moreover, it also reveals a polynomial separation between ST verification (when nodes know a sufficiently good network-size approximation) and MST verification. 
For verifying a $d$-approximate BFS tree, 
we show that the same lower bound holds even if nodes know $n$ exactly, however, the lower bound is sensitive to $d$, which is the stretch parameter. First, under the $\ktzero$ assumption, we show a tight message complexity lower bound of $\Omega(n^2)$ in the $\local$ model, when $d \le \frac{n}{2+\Omega\lt( 1 \rt)}$. 
For the $\kt_\rho$ assumption, we obtain an upper bound on the message complexity of $O(n\log n)$ in the $\congest$ model, when $d \ge \frac{n-1}{\max \{2,\rho+1\}}$, and use a novel charging argument to show that $\Omega\left(\frac{1}{\rho}\left(\frac{n}{\rho}\right)^{1+\frac{c}{\rho}}\right)$ messages are required even in the $\local$ model for comparison-based algorithms.
For the well-studied special case of $\ktone$,
we obtain a tight lower bound of $\Omega(n^2)$.
}

\keywords{Distributed Graph Verification, Message Complexity Lower Bound, Distributed Minimum Spanning Tree}

\maketitle

\section{Introduction} \label{sec:intro}
Verifying the correctness of a given solution to a graph problem is an important problem with numerous applications.
In this setting, there are $n$ nodes that communicate via message passing over the edges of some arbitrary synchronous communication network $G$.
Certain edges in $G$ are labeled and the labeling of an edge $e$ is part of the initial state of the nodes incident to $e$. 
For example, when considering the verification of a minimum spanning tree (MST), in addition to the weight of an edge $e$, the label could also indicate whether $e$ is part of the MST, whereas for verifying a breadth-first search (BFS) tree, $e$'s label may indicate the direction of the edge in the BFS tree $T$ in addition to whether $e \in T$.

Generally, for a graph verification problem $\mathcal{P}$, we assume that the labels correspond to some (possibly weighted and/or directed) graph structure $L$ of $G$.
After observing the labels of their incident edges, the nodes may exchange messages with their neighbors and, eventually, every node needs to output $1$ (``accept'') if $L$ corresponds to a \emph{legal solution} to $\mathcal{P}$ for the communication network $G$.
On the other hand, if $L$ is \emph{illegal}, we study two different requirements:  
\begin{enumerate}
\item $\AllDetect$: Every node outputs $0$ (``reject'').
\item $\OneDetects$: At least one node outputs $0$, whereas the other nodes may output either $0$ or $1$.
\end{enumerate}

We point out that $\AllDetect$ and $\OneDetects$ have both been assumed in previous works.
The $\OneDetects$ requirement, originally proposed in \cite{afek1990memory}, is the basis of the area of distributed verification, e.g., see \cite{korman2005proof,fraigniaud2013towards,naor2020power}, whereas \cite{kor2013tight}, which is closely related to the results of our work, considers $\AllDetect$.
We refer the reader to the 
survey of \cite{feuilloley2016survey}  for a more thorough survey of these results.

Apart from the spanning tree (ST) and minimum spanning tree (MST) verification problems, we also consider verifying approximate versions of breadth-first search (BFS) trees, which we define next:
Consider a connected graph $G$ and some subgraph $H \subseteq G$.
We define $\dist_{H}(u,v)$ to denote the minimum hop distance of nodes $u$ and $v$, when using only edges in $H$.

\begin{definition} \label{def:dBFS}
  A spanning tree $T$ is a \emph{$d$-approximate Breadth-first Search Tree ($d$-approximate BFS)} of $G$ if, for a designated root node $u_0$, the stretch of the shortest-path distance between $u_0$ and any other node in $T$ is at most $d$. Formally,
  $\dist_{T}(u_0, v) \leq d \cdot \dist_{G}(u_0,v)$ for all nodes $v$ in $G$.
\end{definition}
The input labeling for this problem is similar to that of BFS: it induces a directed subgraph and the labels should indicate, for each node, which one of its ports points to its parent and its children in $T$ (if any). 

We study the graph verification problems in both $\congest$ and $\local$ models~\cite{peleg2000distributed}. In $\congest$ model, each message has a bounded size of  $O(\log n)$ bits,  whereas in $\local$ model, the focus is purely on the impact of locality and the message size is unbounded.
Our studies consider the message complexity of distributed algorithms. 
For time complexity, the work of \cite{sarma2012distributed} proves that checking whether a given set of edges induces a spanning tree requires $\Omega(\sqrt{n} + \diam)$ \footnote{ $\diam$ denotes the diameter of the graph} rounds in the $\congest$ model even for randomized algorithms and in fact, they prove that the same bound holds for a long list of fundamental graph verification and construction problems. 
Since \cite{kor2013tight} also gives a deterministic algorithm that has worst-case complexities of $\tilde O(\sqrt{n} + \Diam)$ rounds \footnote{$\tilde O(.)$ hides polylogarithmic factors in $n$.}, the time complexity of spanning tree verification is completely resolved, up to logarithmic factors. %
While the time complexity of distributed verification problems has been studied extensively in previous works, far less is known about the best possible bounds on the message complexity. 
Interestingly, the work of \cite{kor2013tight} also show that any deterministic distributed algorithm that verifies a minimum spanning tree without any knowledge of the network size and guarantees $\AllDetect$ must send $\Theta(n^2)$ messages.
We emphasize that the knowledge of $n$ often strengthens algorithms and reduces their complexity, so our results that are given in spite of assuming nodes posses such knowledge (or, sometimes, the knowledge of an approximation of $n$) are stronger. 
Moreover, our lower bound technique also applies to randomized algorithms. 

\subsection{Our Contributions} \label{sec:contributions}
We show almost-tight bounds on the message complexity of distributed verification for spanning trees and $d$-approximate breadth-first search trees. All our lower bounds for the clean network model (i.e.\ $\ktzero$) hold in the $\local$ model (and hence also $\congest$ model), whereas the upper bounds work in the $\congest$ model. We study the problems in the setting where nodes are equipped with $\alpha$-approximation $\tilde{n}$ of the network size $n$, i.e., $\tilde{n} \in [n/\alpha,\alpha n]$.
For the special case where $\alpha = 1$, the nodes have exact knowledge of the network size. 

\begin{itemize}
\item In Section~\ref{sec:general_kt0_lb}, we present a general lemma (see Lemma \ref{lem:general_kt0_lb}) for deriving message complexity lower bounds for graph verification problems under the $\ktzero$ assumption, where nodes are unaware of their neighbors' IDs initially, which may be of independent interest.
\item For \textbf{spanning tree (ST) verification}, we show that the knowledge of the network size $n$ is crucial for obtaining message-efficient verification algorithms:
  \begin{itemize}
  \item When nodes know the \emph{exact} network size, we give a deterministic ST verification algorithm that guarantees the strong $\AllDetect$ property with a message complexity of only $O(n\log n)$ messages (see Theorem~\ref{thm:mst_upper} on page~\pageref{thm:mst_upper}). 
  \item If nodes have an $\alpha$-approximation of $n$, for any $1 < \alpha < \sqrt{2}$, we still obtain $O(n\log n)$ messages, although we can only achieve $\OneDetects$ (i.e., at least one node detects illegal inputs). We prove that this is unavoidable by showing that $\AllDetect$ requires $\Omega(n^2)$ messages for any $\alpha>1$ (see Theorem~\ref{thm:mstv_all} on page~\pageref{thm:mstv_all}).
  \item On the other hand, we show that, when $\alpha \ge\sqrt{2}$, there is no hope for obtaining a message-efficient algorithm that guarantees $\OneDetects$, as we prove that there are graphs with $\Theta(n^2)$ edges, where the message complexity is $\Omega(n^2)$ (see Theorem~\ref{thm:mstv_all} on page~\pageref{thm:mstv_all}).%
  \end{itemize}
\item For \textbf{MST verification}, we show that $\Omega(n^2)$ poses an insurmountable barrier, as it holds for randomized algorithms, when nodes have perfect knowledge of the network size, and even under $\OneDetects$ (see Corollary~\ref{cor:bfs_kt0_lb} on page~\pageref{cor:bfs_kt0_lb}).
\item For \textbf{verifying a $d$-approximate BFS tree}, we obtain the following results:
    \begin{itemize}
    \item Under the $\kt_0$ assumption, we prove the following:
    \begin{itemize} 
    \item For $d \le \frac{n}{2+\Omega\lt( 1 \rt)}$, any randomized verification algorithm must send $\Omega(n^2)$ messages, even if the nodes have perfect knowledge of the network size (see Theorem~\ref{thm:bfs_kt0_lb} on page~\pageref{thm:bfs_kt0_lb}). 
    \item For $d \ge \frac{n}{2} - \frac{1}{2}$, we give an efficient algorithm that achieves $O(n\log n)$ messages (see Theorem~\ref{thm:bfs_algo} on page~\pageref{thm:bfs_algo}). This shows that the above lower bound is essentially tight in terms of the stretch $d$.
    \end{itemize}
    \item We also consider the $d$-approximate BFS verification problem under the $\kt_\rho$ assumption, for $\rho \ge 1$, where nodes are aware of their $\rho$-hop neighborhood initially, excluding the private random bits of the nodes. Our results hold for the restricted model of comparison-based algorithms that cannot make use of specific values of node IDs. 
    \begin{itemize}
        \item When $d$ is small ($d \le O\lt(\frac{n}{4\rho-2}\rt)$), we show that the lower bound of $\Omega(n^2)$ continues to hold in $\kt_1$. For $\rho \ge 2$, we develop a novel charging argument to show that $\Omega\left(\frac{1}{\rho}\left(\frac{n}{\rho}\right)^{1+\frac{c}{\rho}}\right)$ messages are required, which may turn out to be useful for proving lower bounds for other graph problems in $\kt_\rho$, in particular, for $\rho\ge2$ (see Theorem~\ref{thm:ktx_bfs_lb} on page~\pageref{thm:ktx_bfs_lb}). 
        \item We also show that the restriction on $d$ cannot be improved substantially, by giving an upper bound of $O(n\log n)$ messages when $d \ge \frac{n-1}{\max \{2,\rho+1\}}$ (see Theorem~\ref{thm:bfs_algo} on page~\pageref{thm:bfs_algo}).
    \end{itemize}
    \end{itemize}
\end{itemize} 

\subsection{Additional Related Work} \label{sec:related}
The research on ST, MST and BFS is too vast to provide a comprehensive survey, and thus we limit ourselves to work that is closely related to ours.  
\paragraph{$d$-approximate BFS}
The $d$-approximate BFS tree problem is an important but limited (to a single source) version of the heavily studied spanner concept \cite{peleg1989graph} of a subgraph with a few edges over which the distance between \emph{all} pair of distinct nodes is an approximation of the original distance.
(For $d$-approximate BFS, the stretch requirement applies to each node only with respect to its distance to the root).
In particular, this means that, if $T$ is a $d$-spanner and also a spanning tree, then $T$ must be a $d$-approximate BFS tree. 
However, it is unlikely that $T$ is a $d$-spanner of $G$ due to Erd\H{o}s -Simonovits conjecture that any $d$-spanners with $d<2k+1$ require $\Omega(n^{1+\frac{1}{k}})$ edges in the worst-case. This conjecture was proved for $k = 1, 2, 3, 5$ (see \cite{benson1966minimal, wenger1991extremal}).
The study of BFS approximation (in $\ktzero$) has been motivated by the potential saving in the message and time complexities, especially when compared to those of the Bellman-Ford algorithm; see e.g., \cite{awerbuch1994approximate,nanongkai2014distributed,elkin2004distributed,lenzen2013fast,henzinger2016deterministic}. Also related to $d$-approximate BFS approximation, is the work of \cite{parter2018fault} on fault-tolerant approximate BFS structure, which is a subgraph of the network that contains an approximate BFS spanning tree, after the removal of some set of faulty edges. 

\paragraph{Distributed Verification}
Tarjan~\cite{tarjan1979applications,king1997simpler,harel1985linear,dixon1992verification} considered the question of verifying a minimum weight spanning tree (MST) in the context of centralized computing,
\cite{king1997optimal} addressed the problem in the context of PRAM,
and \cite{korman2006distributed} studied this question in the context of distributed computing with non-determinism, or with pre-processing. 
A verification algorithm may be a part of a fault-tolerant algorithm. Specifically, a verification algorithm can be executed repeatedly. If at some point, the verification fails, then an algorithm for re-computation is activated, followed again by repeated activations of the verification algorithm. In the context of self-stabilization, this was suggested in 
\cite{katz1990self,afek1990memory} (the algorithms here are not self-stabilizing, though). %
More generally, in complexity theory and in cryptography, the issue of the complexity of verifying vs.\ that of computing is one of the main pillars of complexity theory, see, for example, the example of NP-hardness, Zero Knowledge, PCP, and IP (Interactive Proofs). In recent years there has been a lot of research in trying to adapt this kind of theory to distributed computing. We refer the reader to   \cite{korman2005proof,fraigniaud2013towards,naor2020power} for a more thorough survey of these results. 
It seems that the idea to verify a program while it is already running (as opposed to methods such as theorem proving, model checking, or even testing) appeared in general computing possibly after they were studied in distributed computing, but meanwhile, this has become a very developed area, see e.g.\ \cite{leucker2009brief}.  

\paragraph{Message Complexity of Distributed Graph Algorithms}
Understanding the message complexity of distributed algorithms on arbitrary network topologies has received significant attention recently. 
Optimal or near-optimal algorithms are known for several fundamental problems, ranging from designing efficient synchronizers~\cite{DBLP:conf/podc/0001T23} and constructing minimum spanning trees~\cite{king2015construction,DBLP:journals/talg/Pandurangan0S20,DBLP:journals/jacm/Elkin20} to leader election and broadcast~\cite{awerbuch1990trade,DBLP:journals/jacm/KuttenPP0T15,DBLP:conf/wdag/GhaffariK18a}.
The main technique used in the above works for showing lower bounds on the message complexity is by using an edge crossing argument, where the intuition is that nodes cannot distinguish a ``good'' graph from a ``bad'' one unless they communicate over an edge that was crossed. 
However, apart from the previously discussed work of \cite{kor2013tight}, we are not aware of specific results on the message complexity of distributed verification problems.

\subsection{Preliminaries} \label{sec:preliminaries}

\paragraph{Notation} For an integer $n$, we denote the set of $\set{1, 2, \ldots, n}$ by $[n]$. For a graph $H$, we denote the set of vertices of $H$ by $V(H)$. Given a set of vertices $S \subseteq V(H)$, we use the notation $H[S]$ to denote the subgraph induced by the nodes in $S$. 

\paragraph{Computational model}
We consider the standard synchronous $\congest$ and $\local$ models~\cite{peleg2000distributed}, where all nodes are awake initially and communicate via message passing. 
In the $\congest$ model, the bandwidth is limited to a logarithmic number of bits per round over each communication link, whereas there is no such restriction in the $\local$ model.
We assume that each node has a unique ID that is chosen from some polynomial range of integers.

Our main focus of this work is on the \emph{message complexity} of distributed algorithms, which, for deterministic algorithms, is the worst-case number of messages sent over all nodes in any execution.
When analysing randomized algorithms, we assume that each node has access to a private source of unbiased random bits that it may query during its local computation in each round and, in that case, we consider the \emph{expected message complexity}, where the expectation is taken over the private randomness of the nodes.

The initial knowledge of the nodes becomes important when considering message complexity: 
We follow the standard assumptions in the literature, which are $\ktzero$, in the case where nodes do not know the IDs of their neighbors initially.
Under the $\ktzero$ assumption~\cite{awerbuch1990trade}, which is also known as the \emph{clean network model}~\cite{peleg2000distributed}, a node $u$ that has $\delta$ neighbors\footnote{It is thus implicit that the degree of the node is part of its initial knowledge.} also has bidirectional \emph{ports} numbered $1,\dots,\delta$ over which it can send messages.
While $u$ knows its degree, it does not know to which IDs its ports are connected to until it receives a message over this port. 
In contrast, the $\ktone$ assumption ensures that each node knows in advance the IDs of its neighbors and the corresponding port assignments. 
While it takes just a single round to extend $\ktzero$ knowledge to $\ktone$, this would have required $\Omega(m)$ messages in general.  Several algorithms have exploited the additional knowledge provided by $\ktone$ for designing message-efficient algorithms (e.g., \cite{DBLP:conf/icdcn/MashreghiK17,king2015construction,DBLP:conf/wdag/GhaffariK18a,DBLP:conf/wdag/GmyrP18,pai2021can}).

The \emph{state} of a node $u$ consists of its locally stored variables as well as its initial knowledge. After receiving the messages that were addressed to $u$ in the current round, the algorithm performs a \emph{state transition} by taking into account the received messages, $u$'s current state, which may also depend on the node's local random bits when considering randomized algorithms. 
\section{A Framework for Message Complexity Lower Bounds in the $\ktzero$ $\local$ Model} \label{sec:general_kt0_lb}
In this section, we present a general framework for deriving lower bounds on the message complexity of verification problems in the $\ktzero$ $\local$ model.

We remark that the general framework is inspired by the edge crossing technique used in the lower bounds of \cite{DBLP:journals/jacm/KuttenPP0T15,DBLP:conf/wdag/PaiPPR017,pai2021can}, which, however, were designed for specific graph construction and election problems and do not apply to graph \emph{verification}.  
We give some fairly general requirements for a hard 
graph and a corresponding labeling (see Definition~\ref{def:wc_graph}) that, if satisfied, automatically yield nontrivial message complexity lower bounds. 
We start by introducing some technical machinery. 

\noindent\textbf{Rewireable Graphs,  Rewirable Components, and Important Edges.}
Let $H$ be a graph and $L$ be a labelling of $H$. 
We say that an $n$-node graph $H$ is \emph{rewirable} if there exist disjoint subsets $A_1, A_2 \subseteq V(H)$ such that $H[A_1]$ and $H[A_2]$ each contains at least an edge, but there are no edges between $A_1$ and $A_2$. We call $H[A_1]$ and $H[A_2]$ the \emph{rewirable components} of $H$.

In our lower bounds, we identify two \emph{important edges} $e_1=(u_1,v_1) \in H[A_1]$ and $e_2=(u_2,v_2) \in H[A_2]$.
We define $H^{e_1,e_2}$ to be the \emph{rewired graph of $H$ with respect to $e_1$ and $e_2$} on the same vertex set, by removing $e_1$ and $e_2$, and instead connecting $A_1$ and $A_2$ via these four vertices.
Concretely, we have $E(H^{e_1,e_2}) = \lt(E(H) \setminus \set{e_1,e_2}\rt) \cup \set{ (u_1,v_2), (u_2,v_1)}$, whereby $(u_1,v_2)$ and $(u_2,v_1)$ are connected using the same port numbers in $H^{e_1,e_2}$ as for $e_1$ and $e_2$ in $H$.

Next, we present an informal description of our lower bound framework. 
For simplicity, we restrict our discussion here to deterministic algorithms. 
Our approach relies on the construction of a rewirable graph $H$ such that there exists a labeling $L$ which is legal on $H$ but is illegal on any rewired graph of $H$. 
Hence, given the input labeling $L$, if the deterministic verification algorithm is correct, there must exist a node that gives a different output when executed on $H$ than when the execution is on a rewired graph of $H$. 
However, the nodes cannot distinguish the case whether the execution is in $H$ or in $H^{e_1, e_2}$ if the important edges $e_1 \in H[A_1]$ and $e_2 \in H[A_2]$ remain unused throughout the \emph{entire} execution. Consequently, the nodes behave the same in both executions, and therefore produce the same outputs. 
As this would contradict the previous observation that there must exist a node that produces different outputs, it follows that all edges in the rewired components $H[A_1]$ and $H[A_2]$ must be used to send or receive messages, which in turn shows that the message complexity is proportional to the number of edges in the rewirable components. 

Formally, we define $\Inp(G,L,\tilde{n})$ to denote the \emph{input} where we execute the algorithm on graph $G$ with the labeling $L$, and equip all nodes with the network size approximation $\tilde{n}$. 
An input $\Inp(G,L,\tilde{n})$ is said to be \emph{legal} for problem $\mathcal{P}$ if $L$ is a legal solution to $\mathcal{P}$ on the graph $G$.

\medskip
\begin{definition}[Indistinguishability] 
For graphs $H$ and $H'$ where $V(H)=V(H')$, we say that inputs $\Inp(H,L,\tilde{n})$ and $\Inp(H',L',\tilde{n})$ are \emph{indistinguishable for a node $u$}, if, for every deterministic algorithm $\mathcal{A}$, node $u$ has the same state at the start of every round when $\mathcal{A}$ is executed on input $\Inp(H,L,\tilde{n})$, as it does on input $\Inp(H',L',\tilde{n})$. 
\end{definition}

\medskip
Formally, we write $\Inp(H,L,\tilde{n}) \stackrel{}{\cong} \Inp(H',L',\tilde{n})$ if this indistinguishability is true for every node in the graph, and we use the notation $\Inp(H,L,\tilde{n}) \stackrel{S}{\cong} \Inp(H',L',\tilde{n})$ when this holds for every node in some set $S$.

\medskip
\begin{definition}[Hard Base Graph] \label{def:wc_graph}
Consider a verification problem $\mathcal{P}$, and assume that nodes are given an $\alpha$-approximation of the network size.
We say that $H$ is a \emph{hard base graph} for $\mathcal{P}$ if $H$ is rewirable, there is a labeling $L$, a partition of $V(H)$ into vertex sets $S_1$ and $S_2$, such that, for any $\alpha$-approximation $\tilde{n}$ of $|H|$ and any important edges $e_1$ and $e_2$, the following properties hold: %
\begin{compactenum}
\item[(A)] $\Inp(H[S_1],L[S_1],\tilde{n}) \stackrel{S_1}{\cong} \Inp(H,L,\tilde{n}) \stackrel{S_2}{\cong}  \Inp(H[S_2],L[S_2],\tilde{n})$. %
\item[(B)] 
\begin{itemize}
    \item[(i)] For $\AllDetect$, $\tilde{n}$ is an $\alpha$-approximation of $|S_1|$, and $\Inp(H[S_1], L[S_1], \tilde{n})$ is legal for $\mathcal{P}$.
    \item[(ii)] For $\OneDetects$, it must also hold that $\tilde{n}$ is an $\alpha$-approximation of $|S_1|$ and $|S_2|$ (if $S_2\ne\emptyset$).
    Moreover, $\Inp(H[S_1], L[S_1], \tilde{n})$ and $\Inp(H[S_2], L[S_2], \tilde{n})$ must both be legal. 
\end{itemize}

\item[(C)] $\Inp(H^{e_1,e_2}, L, \tilde{n})$ is illegal for $\mathcal{P}$. 
\end{compactenum}
\end{definition}
\medskip

\noindent\textbf{Remarks.} Before stating our lower bound framework based on Definition~\ref{def:wc_graph}, we provide some clarifying comments:
Note that for $H$ to be a hard base graph, it does not need to be an admissible input to problem $\mathcal{P}$. In particular, $H$ can be a disconnected graph even though problem $\mathcal{P}$ is defined for connected networks. 

We emphasize that $S_1$ and $S_2$ are not necessarily related to the rewirable components $A_1$ and $A_2$, introduced above. 
The difference is that $A_1$ and $A_2$ define where the important edges are selected from, whereas $S_1$ and $S_2$ partition $V(H)$ in a way such that $\Inp(H[S_1],L[S_1],\tilde{n})$ or $\Inp(H[S_2],L[S_2],\tilde{n})$ is legal for problem $\mathcal{P}$ (even though $\Inp(H,L,\tilde{n})$ might not actually be an admissible input to problem $\mathcal{P}$).
  
Note that Property~(A) is only relevant when the hard base graph $H$ consists of multiple (i.e., disconnected) components. This will become apparent when proving a lower bound for $d$-approximate BFS tree verification in Section~\ref{sec:bfs_kt0_lb}, where $S_1 = V(H)$ and $S_2 = \emptyset$, which makes the conditions on $H[S_2]$ stated in Property~(A) and Property~(B) vacuously true.

\newcommand{\lemGeneral}{
Consider any $\epsilon$-error randomized algorithm $\mathcal{A}$ for a graph verification problem $\mathcal{P}$, where $\epsilon<\frac{1}{8}$. 
If there exists a hard base graph $H$ for problem $\mathcal{P}$ such that the rewirable components $H[A_1]$ and $H[A_2]$ are both cliques of size $\Theta(n)$,  then $\mathcal{A}$ has an expected message complexity of $\Omega(n^2)$ in the $\ktzero$ $\local$ model. 
}

\medskip
\begin{lemma}[General $\ktzero$ Lower Bound] \label{lem:general_kt0_lb}
\lemGeneral
\end{lemma}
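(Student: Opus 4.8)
The plan is a rewiring (``bridge-crossing'') argument in the spirit of the cited leader-election and spanning-structure lower bounds, adapted to verification: a message-cheap algorithm is blind to a single rewiring performed inside the two cliques, and then Properties~(A)--(C) of a hard base graph turn this blindness into a provably wrong answer.

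Concretely, I would assume for contradiction that $\mathcal{A}$ sends at most $c\,n^2$ messages in expectation on the input $\Inp(H,L,\tilde n)$, where $c>0$ is a small constant fixed at the end. For each edge $e$ of $H$ let $p_e$ be the probability that at least one message ever traverses $e$ in this execution; then $\sum_{e\in E(H)} p_e \le \sum_{e\in E(H)} \EE[\#\text{messages over } e] \le c\,n^2$. Since $H[A_1]$ and $H[A_2]$ are cliques on $\Theta(n)$ vertices, each has $\Omega(n^2)$ edges, so by averaging there are important edges $e_1\in E(H[A_1])$ and $e_2\in E(H[A_2])$ with $p_{e_1}+p_{e_2}=O(c)=:\delta$. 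With $\Found$ the event that some node sends a message over $e_1$ or $e_2$, a union bound gives $\Pr[\Found]\le\delta$ on $\Inp(H,L,\tilde n)$; the same probability holds on $\Inp(H^{e_1,e_2},L,\tilde n)$, since the two executions coincide until the first time a message crosses an important edge.

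Next I would set up two couplings. By Lemma~\ref{lem:rewired}, conditioned on $\neg\Found$ the executions on $\Inp(H,L,\tilde n)$ and on $\Inp(H^{e_1,e_2},L,\tilde n)$ are indistinguishable; I read this as a coupling under which, on $\neg\Found$, every node ends in the same state and hence outputs the same value in both. By Property~(A), for each nonempty $S_i$ I would similarly couple $\Inp(H,L,\tilde n)$ with $\Inp(H[S_i],L[S_i],\tilde n)$ so that every node of $S_i$ outputs the same value in both. Now the contradiction: $\Inp(H^{e_1,e_2},L,\tilde n)$ is illegal by Property~(C), so correctness forces that, with probability $\ge 1-\epsilon$, some node of a witness set $W$ rejects, where $W=\{u\}$ for any fixed $u$ in a legal part $S_j$ under $\AllDetect$ (then all nodes must reject, and by Property~(B) some $S_j$ with $\Inp(H[S_j],L[S_j],\tilde n)$ legal is nonempty, as the empty configuration is never a legal instance), and $W=V(H)=S_1\cup S_2$ under $\OneDetects$ (then Property~(B) makes both $\Inp(H[S_i],L[S_i],\tilde n)$ legal). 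Moving the event ``some node of $W$ rejects'' across the first coupling costs at most $\Pr[\Found]\le\delta$, so on $\Inp(H,L,\tilde n)$ some node of $W$ rejects with probability $\ge 1-\epsilon-\delta$; partitioning $W$ along $S_1,S_2$, there is a $j$ with $\Inp(H[S_j],L[S_j],\tilde n)$ legal such that some node of $W\cap S_j$ rejects with probability $\ge(1-\epsilon-\delta)/2$ on $\Inp(H,L,\tilde n)$, hence, by the second coupling, also on $\Inp(H[S_j],L[S_j],\tilde n)$. But that input is legal, so every node accepts there with probability $\ge 1-\epsilon$, which forces $(1-\epsilon-\delta)/2\le\epsilon$, i.e.\ $\delta\ge 1-3\epsilon>\tfrac12$ since $\epsilon<\tfrac16$. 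Choosing $c$ small enough that $\delta<\tfrac12$ yields the contradiction, so $\mathcal{A}$ sends $\Omega(n^2)$ messages.

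The main obstacle is making the two couplings rigorous — in particular, upgrading the per-node, per-round indistinguishability provided by Lemma~\ref{lem:rewired} and Property~(A) to a genuine coupling under which \emph{all} nodes of the relevant set behave identically, which is exactly what lets us transport the event ``some node of $W\cap S_j$ rejects'' between executions. The clean way is to couple each node's private randomness identically and induct on rounds: by Lemma~\ref{lem:node_view} the nodes start in identical states; nodes of $S_i$ never receive a message from outside $S_i$ (in the hard instances $H[S_i]$ is a union of connected components of $H$), so the induction goes through for the second coupling; and for the rewiring coupling one uses that, before $\Found$ occurs, no node's transcript depends on the endpoints of $e_1$ or $e_2$. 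The remaining ingredients — the averaging over clique edges and the union bounds on $\Pr[\Found]$ — are routine.
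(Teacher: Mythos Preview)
Your proof is correct and follows the same rewiring-and-indistinguishability strategy as the paper: bound $\Pr[\Found]$, use Lemma~\ref{lem:rewired} to transfer the rejection event from the illegal $H^{e_1,e_2}$ to $H$, then use Property~(A) to transfer it to a legal $H[S_j]$ and contradict the error bound. The one notable variation is that you select $e_1,e_2$ deterministically by averaging (picking clique edges with small crossing probability $p_{e_i}$), whereas the paper samples them uniformly at random from the cliques and then proves $\Pr[\Found]\le\tfrac12$ via a port-counting argument about unexplored ports conditioned on a low-message event; your choice avoids that detour and, as a side effect, would let the final inequality $(1-\epsilon-\delta)/2\le\epsilon$ yield a contradiction for any $\epsilon<\tfrac13$, not just $\epsilon<\tfrac16$.
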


\subsection{Proof of Lemma~\ref{lem:general_kt0_lb}}
\label{app:general_kt0_lb}

We will first show the message complexity lower bound of $\Omega(n^2)$ for deterministic algorithms that may fail on some inputs, when the input graph is sampled from a hard distribution. 
Subsequently, we will use Yao's Minimax Lemma to extend the result to randomized algorithms.  

Assume that there exists a deterministic algorithm $\mathcal{A}$ that has a distributional error of at most $2\epsilon$.
Formally, this means that $2\epsilon$ bounds the probability of sampling an input on which $\mathcal{A}$ fails, from a given input distribution. 
  
\noindent\textbf{The Hard Input Distribution $\mu$:}
We first uniformly at random choose the important edges $e_1, e_2$ from their respective rewirable component.
Then, we flip a fair coin and, if the outcome is heads, we run the algorithm on $\Inp(H^{e_1,e_2},L,\tilde{n})$. 
When the outcome is tails, we distinguish two cases: 
If $S_2= \emptyset$, we execute on $\Inp(H[S_1],L[S_1],\tilde{n})$.
On the other hand, if $S_2\ne\emptyset$, we flip the coin once more and use the outcome to determine whether to execute on    $\Inp(H[S_1],L[S_1],\tilde{n})$ or $\Inp(H[S_2],L[S_2],\tilde{n})$.
By a slight abuse of notation, we abbreviate the last two events as $\Inp(H[S_1])$ and $\Inp(H[S_2])$, respectively. 
Note that conditioning on either  $\Inp(H[S_1])$ or $\Inp(H[S_2])$ fully determines the execution of the deterministic algorithm $\mathcal{A}$. 
In addition, we define $\rew$ (``rewired'') as the event where an input graph is a rewired graph. 
Note that when conditioning on $\rew$, 
the input graph can be any rewired graph $H^{e_1,e_2}$, where $e_1$, $e_2$ are sampled uniformly at random from their respective rewirable component.   %

We define $\Found$ to be the event that a message is sent over an important edge by the algorithm.
According to the distribution $\mu$, we have 
\begin{align}
\Pr\lt[ \rew \rt] &= \tfrac{1}{2}, \label{eq:baserew}
  \\
\Pr\lt[ \Inp(H[S_1]) \rt] &\ge \tfrac{1}{4}, \label{eq:s1prob}\\
\text{if $S_2\ne \emptyset$:} \Pr\lt[ \Inp(H[S_2]) \rt] &= \tfrac{1}{4}. \label{eq:s2prob}
\end{align}

We define $M$ to be the number of messages sent by the algorithm, and
we start our analysis by deriving a lower bound on the expectation of $M$ conditioned on events $\Found$ and $\rew$.
\medskip
\begin{lemma} \label{lem:found}
  $\EE\lt[ M \ \md|\ \Found, \rew \rt] = \Omega\lt( n^{2} \rt)$.
\end{lemma}
\begin{proof}
By assumption, the important edges $e_1=(u_1,v_1)$ and $e_2=(u_2,v_2)$ are chosen uniformly at random such that $u_1, v_1 \in A_1$ and $u_2,v_2 \in A_2$. 
Define $n_i = |A_i|$, for $i\in\set{1,2}$, and recall that $n_i = c_i\,n$, for some constant $c_i>0$, which tells us that the nodes in $A_i$ have $\frac{n_i(n_i-1)}{2} \ge \frac{n_i^2}{4}$ incident edges. 

We can model the task of discovering one of the important edges by the nodes in $A_i$ by the following sampling process:
There is a population $P$ of $\frac{n_i^{2}}{4}$ elements that contains exactly two \emph{bad} elements (i.e., the important edges), whereas the rest are \emph{good} elements.
We sequentially sample (\emph{without} replacement) from $P$ until we obtain a bad element.
The number of trials follows the negative hypergeometric distribution (see, e.g., \cite{feller}), which has an expected value of $\frac{s \cdot f}{|P|-f+1}$, where $|P|$ is the population size, $s$ is the number of desired bad elements after which the process stops, and $f$ refers to the number of good elements in the population.
Plugging in $s=1$, $|P| = \frac{n_i^{2}}{4}$, and $f=\frac{n_i^{2}}{4}-2$, shows that the expected number of trials (i.e., messages sent until a node in $A_i$ finds an important edge) is at least $\frac{n_i^{2}-8}{12} \ge \frac{n_i^{2}}{13}$, for sufficiently large $n$.
\end{proof}

\begin{lemma} \label{lem:foundCor}
If $\EE \lt[ M \rt] = o\lt( n^{2} \rt)$, then the following holds for sufficiently large $n$:
\begin{align}
\Pr\lt[ \Found \ \md|\ \rew \rt] &\le \tfrac{1}{16}. \label{eq:found2}
\end{align}
\end{lemma}

\begin{proof}
We have
\begin{align}
 \EE\lt[ M \rt] 
 &\ge
 \EE\lt[ M \ \md|\ \Found, \rew \rt] 
 \Pr\lt[ \Found, \rew \rt] \notag\\
 &=
 \EE\lt[ M \ \md|\ \Found, \rew \rt] 
 \Pr\lt[ \Found\ \md|\ \rew \rt]\Pr\lt[\rew\rt] \notag\\ 
 \ann{by Lem.~\ref{lem:found}}
 &=
 \Omega\lt( n^2 \rt)
 \cdot
 \begin{multlined}[t]
 \Pr\lt[ \Found \ \md|\ \rew \rt]  
 \Pr\lt[ \rew \rt] \notag 
 \end{multlined}\notag\\ 
 \ann{by \eqref{eq:baserew}}
 &=
 \Omega\lt( n^2 \rt)
 \cdot
 \Pr\lt[ \Found \ \md|\ \rew \rt].  
\notag 
\end{align}
Hence, if $\EE \lt[ M \rt] = o\lt( n^{2} \rt)$, then it must be that $\Pr\lt[ \Found \ \md|\ \rew \rt] = o(1) \le \frac{1}{16}$. 
\end{proof}

Thus, assuming that $\EE\lt[M\rt] = o(n^2)$, Lemma~\ref{lem:foundCor} states that \eqref{eq:found2} holds, and we will show that this leads to a contradiction in the remainder of the proof.

For a given set of vertices $X$, we define the event $Z[X]$ (``Zero output'') as follows:
\begin{enumerate}
    \item For $\OneDetects$, $Z[X]$ occurs if at least one node in $X$ outputs 0. 
    \item For $\AllDetect$, $Z[X]$ occurs if all nodes in $X$ output 0. 
\end{enumerate}

By \eqref{eq:found2}, we have
\begin{align}
  \Pr \lt(Z[V(H)] \mid \rew \rt)  
    &=
        \begin{multlined}[t]
        \Pr \lt(Z[V(H)] \mid \rew, \neg \Found\rt) \Pr(\neg \Found \mid \rew) \\
        + \Pr \lt(Z[V(H)] \mid \rew, \Found \rt)\Pr(\Found \mid \rew)
        \end{multlined}
    \notag \\
    \ann{by \eqref{eq:found2}}
    &\le
        \begin{multlined}[t]
        \Pr \lt(Z[V(H)] \mid \rew, \neg \Found\rt) \Pr(\neg \Found \mid \rew) + \tfrac{1}{16}
        \end{multlined}
   \notag\\ 
    &\le \Pr \lt(Z[V(H)] \mid \rew, \neg \Found\rt)  + \tfrac{1}{16}.  
    \label{eq:g_lb1}
\end{align}
Property~(C) of Definition~\ref{def:wc_graph} tells us that $\Inp(H^{e_1,e_2},L,\tilde{n})$ is illegal for any important edges $e_1, e_2$.

Hence, when conditioning on $\rew$, $\mathcal{A}$ fails unless $Z[V(H)]$ occurs.  
Since $\mathcal{A}$ has a distributional error of at most $2\epsilon$, we have 
\begin{align}
\Pr \lt(\neg Z[V(H)] \mid \rew\rt) \Pr(\rew) 
  \le
\Pr \lt(\mathcal{A} \text{ fails} \mid \rew\rt) \Pr(\rew) 
  \le \Pr \lt(\mathcal{A} \text{ fails} \rt)  
    \leq
        2\epsilon
    \notag
\end{align}
Since $\rew$ happens with probability $\frac{1}{2}$, it follows that
\begin{align}\label{eq:g_lb2_1}
\Pr \lt( Z[V(H)] \mid \rew\rt) > 1 - 4\epsilon.
\end{align}
Combining \eqref{eq:g_lb1} and  \eqref{eq:g_lb2_1} yields 
\begin{align}
\Pr \lt(Z[V(H)]\ \md|\ \rew,\neg \Found\rt) 
  &>  \tfrac{15}{16}-4\epsilon > 0,  \label{eq:g_lb2}
\end{align}
since $\epsilon < \tfrac{1}{8}$.

\medskip
\begin{lemma} \label{lem:rewired}
Consider a graph $H$, a labeling $L$, and a rewired graph $H^{e_1,e_2}$ of $H$.
If $\mathcal{A}$ does not send a message over neither $e_1$ nor $e_2$ when executing on $\Inp(H^{e_1,e_2},L,\tilde{n})$, then it holds that $\Inp(H^{e_1,e_2},L,\tilde{n})  \stackrel{}{\cong} \Inp(H,L,\tilde{n})$. 
In particular, no message is sent over $e_1$ and $e_2$ on input $\Inp(H,L,\tilde{n})$.
\end{lemma}

\begin{proof}
The proof is by induction over the number of rounds, where the basis case follows immediately from the fact that all nodes have identical initial states in both inputs, due to the $\ktzero$ assumption. %

For the inductive step, assume that every node is in the same state at the start of some round $r \ge 1$ when executing on $\Inp(H,L,\tilde{n})$ as well as on $\Inp(H^{e_1,e_2},L,\tilde{n})$.
Since no messages are sent over $e_1$ or $e_2$ and there is no other difference between the two input graphs, it is clear that every node receives the same set of messages over the exact same ports in both executions.
\end{proof}

It follows that the conditioning on $\neg \Found$ (i.e., no important edge is discovered) in \eqref{eq:g_lb2} restricts the sample space of $e_1$ and $e_2$ to only those edges from the rewirable components, over which $\mathcal{A}$ does not send any messages on input  $\Inp(H^{e_1,e_2},L,\tilde{n})$. 
In particular, \eqref{eq:g_lb2} implies that there exists a choice for $e_1$ and $e_2$ such that event $Z[V(H)]$ happens.  
Fix such edges $e_1$, $e_2$. Lemma~\ref{lem:rewired} tells us that the algorithm behaves the same on the inputs $\Inp(H,L,\tilde{n})$ and $\Inp(H^{e_1,e_2},L,\tilde{n})$, i.e.,
\begin{align}
 \text{$Z[V(H)]$ occurs on $\Inp(H,L,\tilde{n})$}. \label{eq:g_lb3}
\end{align}

The following lemma provides a contradiction to \eqref{eq:g_lb3}, which in turn shows that the premise of Lemma~\ref{lem:foundCor} must be false, thus implying the sought bound on the expected message complexity for deterministic algorithms. 

\medskip
\begin{lemma} \label{lem:g_ub}
Event $Z[V(H)]$ does not occur when executing on input $\Inp(H,L,\tilde{n})$.
\end{lemma}
\begin{proof}

We first show the result for $\OneDetects$.
Property~(B) of Definition~\ref{def:wc_graph} ensures that $\Inp(H[S_1],L[S_1],\tilde{n})$ and $\Inp(H[S_2],L[S_2],\tilde{n})$ (if $S_2 \ne \emptyset$) are both legal for $\mathcal{P}$.
From \eqref{eq:s1prob}, it follows that $\Pr\lt( \Inp(H[S_1]) \rt) \ge \tfrac{1}{4}$ and \eqref{eq:s2prob} ensures $\Pr\lt( \Inp(H[S_2]) \rt) \ge \tfrac{1}{4}$ (if $S_2 \ne \emptyset$).

As the distributional error of $\mathcal{A}$ is at most $2\epsilon<\tfrac{1}{4}$, we have the following properties:
\begin{align}
  \text{$Z[V(S_1)]$ does not occur on $\Inp(H[S_1],L[S_1],\tilde{n})$;}
  \label{eq:hs1}\\
  \text{if $S_2 \neq \emptyset$: $Z[V(S_2)]$ does not occur on $\Inp(H[S_2],L[S_2],\tilde{n})$.}
  \label{eq:hs2}
\end{align}
If $S_2=\emptyset$, then \[\Inp(H,L,\tilde{n}) = \Inp(H[S_1],L[S_1],\tilde{n}), \] and we are done due to \eqref{eq:hs1}. Thus, we assume that $S_2\ne\emptyset$.
Recall that $Z[V(H)]$ is true if either $Z[V(S_1)]$ or $Z[V(S_2)]$ are true. Using the indistinguishability guaranteed by Property~(A) of Definition~\ref{def:wc_graph}, we have $\Inp(H[S_1],L[S_1],\tilde{n}) \ \stackrel{S_1}{\cong} \Inp(H,L,\tilde{n})$, and an analogous statement holds for the nodes in $S_2$.
The lemma follows combining \eqref{eq:hs1} and \eqref{eq:hs2}.

Now, for $\AllDetect$, note that $Z[V(H)]$ can only be true if $Z[V(S_1)]$ is true. 
Property~(B) of Definition~\ref{def:wc_graph} tells us that $\Inp(H[S_1],L[S_1],\tilde{n})$ is legal for $\mathcal{P}$. 
Analogously to \OneDetects, it follows from $\Pr\lt( \Inp(H[S_1]) \rt) \ge \tfrac{1}{4}$ that \eqref{eq:hs1} holds.
Using the indistinguishability guaranteed by Property~(A), the lemma follows from \eqref{eq:hs1}.

\end{proof}
We have shown that any deterministic algorithm that has a distributional error of at most $2\epsilon$ when considering the hard input distribution $\mu$ has expected message complexity of $\Omega(n^2)$. 
To complete the proof of Lemma~\ref{lem:general_kt0_lb}, we use Yao's Minimax Lemma, which we restate for completeness. 

\begin{lemma}[Yao's Minimax Lemma, see Prop.~2.6 in \cite{motwani}] \label{lem:yao}
Consider a finite collection of graphs \( \mathcal{I} \) and a distribution \( \Psi \) on \( \mathcal{I} \). 
Let \( X \) be the minimum expected cost of any deterministic algorithm  that succeeds with probability \( 1 - 2\epsilon \) when the graph is sampled according to $\Psi$, for some positive constant \( \epsilon \). Then \( \frac{X}{2} \) lower bounds the expected cost of any randomized algorithm \( R \) on the worst-case graph of \( \mathcal{I} \) that succeeds with probability at least \( 1 - \epsilon \).
\end{lemma}
\medskip

\subsection{A Lower Bound for $d$-Approximate BFS Verification} \label{sec:bfs_kt0_lb}

In this section,  we assume that  the labeling $T$ is an arbitrary directed subgraph of the network $G$. 
The verification task requires checking whether $T$ is a $d$-approximate BFS tree of $G$ (see Def.~\ref{def:dBFS}). Since the subgraph $T$ is directed, each node in $T$ knows its parents and children in $T$. 
We remark that this explicit specification of the direction of the tree (compared to the case where $T$ is an undirected subtree of $G$) can only help the algorithm and hence strengthens our lower bound.  

\begin{theorem} \label{thm:bfs_kt0_lb}
Consider any $\epsilon$-error randomized algorithm that solves $d$-approximate BFS tree verification in the $\ktzero$ $\local$ model, and for any $\epsilon < \frac{1}{8}$.
When $d< \frac{n}{2+\Omega(1)}$, there exists an $n$-node network where the expected message complexity is $\Omega(n^2)$. This holds even when all nodes know the exact network size $n$.
\end{theorem} 

In the remainder of this section, we prove Theorem~\ref{thm:bfs_kt0_lb}.
To instantiate Lemma~\ref{lem:general_kt0_lb}, we define a hard base graph $H$ that satisfies Def.~\ref{def:wc_graph}. 
For the sake of readability and since it does not change the asymptotic bounds, we assume that $2d$ and $\gamma = \frac{n}{2}-d$ are integers. %
We group the vertices of $H$ into $2d+2$ \emph{levels} numbered $0,\dots,2d+1$.
Levels $1$ and $2d+1$ contain $\gamma$ nodes each, denoted by $u_1^{(1)},\dots,u_{\gamma}^{(1)}$ and $u_1^{(2d+1)},\dots,u_{\gamma}^{(2d+1)}$, respectively.
All the other levels consist of only a single node, and we use $u^{(i)}$ to denote the (single) node on level $i \in ([0,2d] \setminus \set{1})$.

Next, we define the edges of $H$. 
Every node on level $i < 2d+1$ is connected via \emph{inter-level} edges to every other node on level $i+1$.
Moreover, the nodes on each level form a clique. 
Figure~\ref{fig:bfs_lb_unwired} gives an example of this construction. 

To ensure that $H$ is rewirable, we set $A_1$ and $A_2$ to be the sets of clique nodes on level $1$ and level $2d+1$, respectively. 
Recall that the number of nodes in $A_1$ (and $A_2$) is $\gamma = \frac{n}{2}-d$. Since $d<\frac{n}{2+\Omega(1)}$, it is straightforward to verify that $\gamma$ is $\Theta(n)$. 
We summarize the properties of $H$ in the next lemma:

\definecolor{myblue}{RGB}{80,80,160}
\definecolor{vcol}{RGB}{196,218,238}
\definecolor{orange}{RGB}{225,113,57}
\definecolor{mygreen}{RGB}{211,238,205}

\begin{figure}[t]
  \centering
\begin{subfigure}[t]{0.45\textwidth}
  \centering
\begin{tikzpicture}[auto,yscale=0.8,scale=0.7]
  \tikzset{vertex/.style={draw,circle,fill=vcol}} %
  \tikzset{vertex-set/.style={fill=mygreen,rounded corners}} %
  \tikzset{label/.style={text=black}} %
  \tikzset{long/.style={decorate,decoration=snake}} %
  \tikzset{edge/.style={draw,black,-,auto}} %
  \tikzstyle{orange_edge}=[very thick,draw=orange,-,auto]
  \pgfdeclarelayer{background}
  \pgfdeclarelayer{inbetween}
  \pgfdeclarelayer{nodelayer}
  \pgfdeclarelayer{edgelayer}
  \pgfsetlayers{background,inbetween,edgelayer,nodelayer,main}

	\begin{pgfonlayer}{nodelayer}
		\node [label,xshift=-0.1cm]  at (3, 13) {$u_1^{(1)}$};
		\node [label,xshift=-0.1cm] at (5, 13) {$u_2^{(1)}$};
		\node [label,xshift=-0.1cm]  at (7, 13) {$u_3^{(1)}$};
		\node [label,xshift=-0.1cm] at (9, 13) {$u_4^{(1)}$};
		\node [label,xshift=-0.1cm] at (3, 11) {$u^{(2)}$};
		\node [label] (vdots) at (3.5, 9) {$\vdots$};
		\node [label,xshift=-0.2cm] at (3, 7) {$u^{(2d)}$};
		\node [label,xshift=-0.1cm] at (3, 15) {$u^{(0)}$};
		\node [style=vertex] (u11) at (3.5, 13) {};
		\node [style=vertex] (u0) at (3.5, 15) {};
		\node [style=vertex] (u12) at (5.5, 13) {};
		\node [style=vertex] (u13) at (7.5, 13) {};
		\node [style=vertex] (u14) at (9.5, 13) {};
		\node [style=vertex] (u2) at (3.5, 11) {};
		\node [style=vertex] (u2d) at (3.5, 7) {};
		\node [style=vertex] (u2d+1-1) at (3.5, 5) {};
		\node [style=vertex] (u2d+1-2) at (5.5, 5) {};
		\node [style=vertex] (u2d+1-3) at (7.5, 5) {};
		\node [style=vertex] (u2d+1-4) at (9.5, 5) {};
		\node [label,yshift=0.2cm,below of=u2d+1-1]  {$u_1^{(2d+1)}$};
		\node [label,yshift=0.2cm,below of=u2d+1-2]  {$u_2^{(2d+1)}$};
		\node [label,yshift=0.2cm,below of=u2d+1-3]  {$u_3^{(2d+1)}$};
		\node [label,yshift=0.2cm,below of=u2d+1-4]  {$u_4^{(2d+1)}$};
	\end{pgfonlayer}
  \begin{pgfonlayer}{edgelayer}
    \draw[orange_edge,->] (u0) to (u11);
    \draw[orange_edge,bend left,->] (u0) to (u12);
    \draw[orange_edge,bend left,->] (u0) to (u13);
    \draw[orange_edge,bend left,->] (u0) to (u14);
    \draw[edge,out=315,in=225] (u11) to (u12);
    \draw[edge,out=45] (u11) to (u13);
    \draw[edge,out=45] (u11) to (u14);
    \draw[edge,out=315,in=225] (u12) to (u13);
    \draw[edge,out=45] (u12) to (u14);
    \draw[edge,out=315,in=225] (u13) to (u14);
    \draw[orange_edge,->] (u11) to (u2);
    \draw[edge,bend left] (u12) to (u2);
    \draw[edge,bend left] (u13) to (u2);
    \draw[edge,bend left] (u14) to (u2);
    \draw[orange_edge,->] (u2) to (vdots);
    \draw[orange_edge,->] (vdots) to (u2d);
    \draw[orange_edge,->] (u2d) to (u2d+1-1);
    \draw[orange_edge,bend right,<-] (u2d+1-2) to (u2d);
    \draw[orange_edge,bend right,<-] (u2d+1-3) to (u2d);
    \draw[orange_edge,bend right,<-] (u2d+1-4) to (u2d);
    \draw[edge,out=315,in=225] (u2d+1-1) to (u2d+1-2);
    \draw[edge,out=45] (u2d+1-1) to (u2d+1-3);
    \draw[edge,out=45] (u2d+1-1) to (u2d+1-4);
    \draw[edge,out=315,in=225] (u2d+1-2) to (u2d+1-3);
    \draw[edge,out=45] (u2d+1-2) to (u2d+1-4);
    \draw[edge,out=315,in=225] (u2d+1-3) to (u2d+1-4);
  \end{pgfonlayer}
\end{tikzpicture}
\caption{\small The hard base graph $H$ for $d$-approximate BFS tree verification}
\label{fig:bfs_lb_unwired}
\end{subfigure}
\phantom{------}
\begin{subfigure}[t]{0.45\textwidth}
  \centering
\begin{tikzpicture}[auto,yscale=0.8,scale=0.7]
  \tikzset{vertex/.style={draw,circle,fill=vcol}} %
  \tikzset{vertex-set/.style={fill=mygreen,rounded corners}} %
  \tikzset{label/.style={text=black}} %
  \tikzset{long/.style={decorate,decoration=snake}} %
  \tikzset{edge/.style={draw,black,auto}} %
  \tikzstyle{orange_edge}=[very thick,draw=orange,-,auto]
  \pgfdeclarelayer{background}
  \pgfdeclarelayer{inbetween}
  \pgfdeclarelayer{nodelayer}
  \pgfdeclarelayer{edgelayer}
  \pgfsetlayers{background,inbetween,edgelayer,nodelayer,main}

	\begin{pgfonlayer}{nodelayer}
		\node [label,xshift=-0.1cm]  at (3, 13) {$u_1^{(1)}$};
		\node [label,xshift=-0.1cm] at (5, 13) {$u_2^{(1)}$};
		\node [label,xshift=-0.1cm]  at (7, 13) {$u_3^{(1)}$};
		\node [label,xshift=-0.1cm] at (9, 13) {$u_4^{(1)}$};
		\node [label] at (3, 11) {$u^{(2)}$};
		\node [label] (vdots) at (3.5, 9) {$\vdots$};
		\node [label,xshift=-0.2cm] at (3, 7) {$u^{(2d)}$};
		\node [label,xshift=-0.1cm] at (3, 15) {$u^{(0)}$};
		\node [style=vertex] (u11) at (3.5, 13) {};
		\node [style=vertex] (u0) at (3.5, 15) {};
		\node [style=vertex] (u12) at (5.5, 13) {};
		\node [style=vertex] (u13) at (7.5, 13) {};
		\node [style=vertex] (u14) at (9.5, 13) {};
		\node [style=vertex] (u2) at (3.5, 11) {};
		\node [style=vertex] (u2d) at (3.5, 7) {};
		\node [style=vertex] (u2d+1-1) at (3.5, 5) {};
		\node [style=vertex] (u2d+1-2) at (5.5, 5) {};
		\node [style=vertex] (u2d+1-3) at (7.5, 5) {};
		\node [style=vertex] (u2d+1-4) at (9.5, 5) {};
		\node [label,yshift=0.2cm,below of=u2d+1-1]  {$u_1^{(2d+1)}$};
		\node [label,yshift=0.2cm,below of=u2d+1-2]  {$u_2^{(2d+1)}$};
		\node [label,yshift=0.2cm,below of=u2d+1-3]  {$u_3^{(2d+1)}$};
		\node [label,yshift=0.2cm,below of=u2d+1-4]  {$u_4^{(2d+1)}$};
	\end{pgfonlayer}
  \begin{pgfonlayer}{edgelayer}
    \draw[orange_edge,->] (u0) to (u11);
    \draw[orange_edge,bend left,->] (u0) to (u12);
    \draw[orange_edge,bend left,->] (u0) to (u13);
    \draw[orange_edge,bend left,->] (u0) to (u14);
    \draw[edge,out=315,in=225] (u11) to (u12);
    \draw[edge,out=45] (u11) to (u13);
    \draw[edge,out=45] (u11) to (u14);
    \draw[edge,out=315,in=225] (u12) to (u13);
    \draw[edge,out=45] (u12) to (u14);
    \draw[edge,myblue,out=315,in=225,very thick,dashed] (u13) to node[below,xshift=-0.3cm] {$e_1$} (u14);
    \draw[orange_edge,->] (u11) to (u2);
    \draw[edge,bend left] (u12) to (u2);
    \draw[edge,bend left] (u13) to (u2);
    \draw[edge,bend left] (u14) to (u2);
    \draw[orange_edge,->] (u2) to (vdots);
    \draw[orange_edge,->] (vdots) to (u2d);
    \draw[orange_edge,->] (u2d) to (u2d+1-1);
    \draw[orange_edge,bend right,<-] (u2d+1-2) to (u2d);
    \draw[orange_edge,bend right,<-] (u2d+1-3) to (u2d);
    \draw[orange_edge,bend right,<-] (u2d+1-4) to (u2d);
    \draw[edge,out=315,in=225] (u2d+1-1) to (u2d+1-2);
    \draw[edge,out=45] (u2d+1-1) to (u2d+1-3);
    \draw[edge,out=45] (u2d+1-1) to (u2d+1-4);
    \draw[edge,out=315,in=225] (u2d+1-2) to (u2d+1-3);
    \draw[edge,draw=myblue,very thick,dashed,out=45] (u2d+1-2) to node[below] {$e_2$} (u2d+1-4);
    \draw[edge,out=315,in=225] (u2d+1-3) to (u2d+1-4);
    \draw[edge,draw=myblue,very thick,out=280,in=80] (u13) to (u2d+1-4);
    \draw[edge,draw=myblue,very thick,out=280,in=80] (u14) to (u2d+1-2);
  \end{pgfonlayer}
\end{tikzpicture}
\caption{\small The rewired graph $H^{e_1,e_2}$ with the (dashed) important edges $e_1$ and $e_2$ that are replaced by the thick blue edges.}
\label{fig:bfs_lb_rewired}
\end{subfigure}

\caption{\small The graphs used in the lower bound for $d$-approximate BFS verification. The  directed orange edges show the labeling, which is a BFS tree, and hence a legal input for any $d$.}
\label{fig:bfs_lb}
\end{figure}
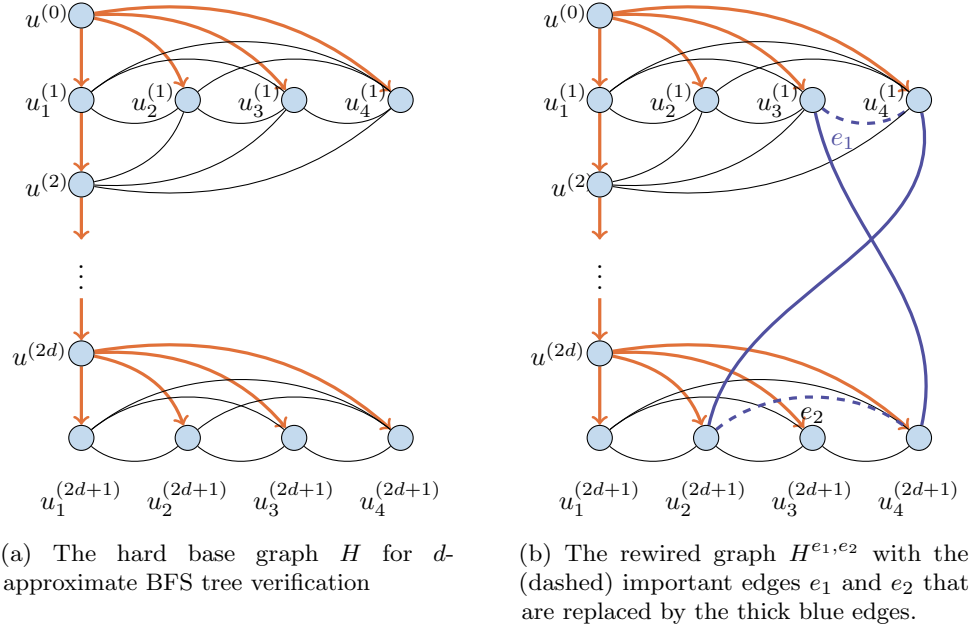

\begin{lemma} \label{lem:bfs_graph}
Graph $H$ has $2\gamma+2d=n$ nodes and $\gamma^2+2\gamma+2d-2=\Theta(\gamma^2)$ edges.
It is a rewirable graph that contains two cliques, each on $\gamma = \Theta(n)$ nodes, as rewirable components.
The directed subgraph $T \subseteq H$ which contains all nodes of $H$, all the inter-level edges from level $0$ to $1$, the edge $(u_1^{(1)}, u^{(2)})$, and all the inter-level edges from level $i$ to $i+1$ for $i \geq 2$, is a $d$-approximate BFS tree of $H$.
(The directed orange edges in Figure~\ref{fig:bfs_lb_unwired} provide an example of $T$). 
\end{lemma}

The next lemma tells us that the properties of Definition~\ref{def:wc_graph}  hold, which will allow us to instantiate Lemma~\ref{lem:general_kt0_lb} for obtaining the sought lower bound.

\begin{lemma} \label{lem:wc_bfs}
Consider the directed tree $T$ defined in Lemma~\ref{lem:bfs_graph}.
Then, $T$ is a $d$-approximate BFS tree of $H$, but not of any rewired graph $H^{e_1,e_2}$ and, consequently, graph $H$ is a hard base graph for $d$-approximate BFS verification with labeling $T$.
This holds even if all nodes know the exact network size.
\end{lemma}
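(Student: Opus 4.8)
The plan is to verify the three properties of Definition~\ref{def:wc_graph} for the graph $H$ and labeling $T$ described above, which by Lemma~\ref{lem:bfs_graph} is already known to be a rewirable graph whose rewirable components $H[A_1]$ and $H[A_2]$ are the cliques on level $1$ and level $2d+1$. Following the remark after Definition~\ref{def:wc_graph}, I would take $S_1 = V(H)$ and $S_2 = \emptyset$, so that all conditions involving $H[S_2]$ become vacuous. Then Property~(A) reduces to the trivial statement $\Inp(H[S_1],L[S_1],\tilde n) \stackrel{S_1}{\cong} \Inp(H,L,\tilde n)$, which holds since $H[S_1] = H$; and Property~(B) under $\OneDetects$ requires that $\Inp(H[S_1],L[S_1],\tilde n) = \Inp(H,T,\tilde n)$ be legal, i.e.\ that $T$ is a $d$-approximate BFS tree of $H$. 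This last fact is exactly the final claim of Lemma~\ref{lem:bfs_graph}, but I would re-derive it here: $T$ consists of all directed inter-level edges, so the unique $T$-path from the root $u^{(0)}$ to any node on level $i$ has exactly $i$ hops, hence $\dist_T(u^{(0)},\cdot) \le 2d+1$, while $\dist_H(u^{(0)},\cdot) \ge 1$ for every non-root node; thus the stretch is at most $2d+1 \le d \cdot 1$ once $d \ge 1$ (and more carefully, for a level-$i$ node the $H$-distance is also $i$ since consecutive levels are the only adjacencies, so the stretch is in fact $1$). Actually the cleanest statement is that $T$ preserves all root-distances exactly, so $T$ is even an exact BFS tree of $H$, which trivially implies it is a $d$-approximate one for any $d \ge 1$.

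The substance of the lemma is Property~(C): for every choice of important edges $e_1 = (u_i^{(1)},u_j^{(1)}) \in H[A_1]$ and $e_2 = (u_k^{(2d+1)},u_\ell^{(2d+1)}) \in H[A_2]$, the labeling $T$ is not a $d$-approximate BFS tree of the rewired graph $H^{e_1,e_2}$. Here I would argue as follows. In $H^{e_1,e_2}$ the edges $e_1$ and $e_2$ are deleted and replaced by the two ``long'' edges $(u_i^{(1)}, u_\ell^{(2d+1)})$ and $(u_k^{(2d+1)}, u_j^{(1)})$, each directly connecting a level-$1$ node to a level-$(2d+1)$ node. Consequently, in $H^{e_1,e_2}$ the root $u^{(0)}$ reaches $u_\ell^{(2d+1)}$ in only $2$ hops: $u^{(0)} \to u_i^{(1)} \to u_\ell^{(2d+1)}$, so $\dist_{H^{e_1,e_2}}(u^{(0)}, u_\ell^{(2d+1)}) = 2$. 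But $T$ is unchanged by the rewiring only in its \emph{labeling}; as a subgraph of $H^{e_1,e_2}$, the set of tree edges $T$ still forces the path from $u^{(0)}$ to any level-$(2d+1)$ node to descend level by level, taking $2d+1$ hops — and one must check $T$ is still a spanning \emph{tree} of $H^{e_1,e_2}$, which holds because $T$ used none of $e_1,e_2$ (they are intra-level clique edges) and the rewiring does not alter $T$'s edge set, so $T$ still spans all $n$ nodes and is acyclic. Therefore $\dist_T(u^{(0)}, u_\ell^{(2d+1)}) = 2d+1$ while $\dist_{H^{e_1,e_2}}(u^{(0)}, u_\ell^{(2d+1)}) = 2$, giving a stretch of $\frac{2d+1}{2} > d$, so $T$ violates Definition~\ref{def:dBFS} on $H^{e_1,e_2}$ and is illegal.

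One subtlety I would be careful about is whether the rewiring could \emph{also} shorten the $T$-distance: the labeling $T$ is defined as a fixed set of ports/edges, and by the construction of $H^{e_1,e_2}$ the new edges reuse the port numbers of $e_1,e_2$, but $e_1,e_2 \notin T$, so no tree edge changes its endpoints; hence $\dist_T$ in $H^{e_1,e_2}$ is computed in the same tree and equals $2d+1$ for the relevant node. A second point worth spelling out is the case analysis over which nodes $u_i^{(1)}, u_j^{(1)}, u_k^{(2d+1)}, u_\ell^{(2d+1)}$ are selected: since $\gamma \ge 2$ (because $\gamma \le \frac n2 - 1$ and $n$ large, and we need $H[A_1],H[A_2]$ to contain an edge, forcing $\gamma \ge 2$), there always exist two distinct level-$1$ nodes and two distinct level-$(2d+1)$ nodes, so at least one level-$1$ node, namely $u_\ell$'s new neighbor's... — more simply, the node $u_\ell^{(2d+1)}$ that gains the short path to the root always has $T$-distance $2d+1$ regardless of indices, so the argument is uniform. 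The main obstacle, then, is not any deep idea but making the ``$T$ is still a spanning tree and its distances are unchanged under rewiring'' step airtight, and confirming the numerical inequality $\frac{2d+1}{2} > d$ (which is strict, hence genuinely illegal) together with the edge/ID count $\gamma^2 + 2\gamma + 2d - 2 = \Theta(\gamma^2)$ claimed in Lemma~\ref{lem:bfs_graph}, so that Lemma~\ref{lem:general_kt0_lb} applies once $\gamma = \Theta(n)$.
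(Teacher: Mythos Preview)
Your proposal is correct and follows essentially the same approach as the paper: set $S_1 = V(H)$, $S_2 = \emptyset$, invoke Lemma~\ref{lem:bfs_graph} (or re-derive) for Property~(B), and for Property~(C) exhibit a level-$(2d+1)$ node whose $T$-distance to the root stays $2d+1$ while its $H^{e_1,e_2}$-distance drops to $2$. You are in fact more careful than the paper in two places---explicitly checking that $T$ remains a spanning tree of $H^{e_1,e_2}$ (since $e_1,e_2$ are intra-level clique edges, not tree edges), and spelling out the inequality $\tfrac{2d+1}{2} > d$---which are worth keeping; the brief detour ``$2d+1 \le d\cdot 1$'' should be deleted, as you already self-correct to the clean observation that $T$ preserves root-distances exactly in $H$.
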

\begin{proof}
We define $S_1 = V(H)$ and thus $S_2 = \emptyset$.
For a given pair of important edges $e_1$ and $e_2$, we need to show Properties~(A), (B), and (C) of Definition~\ref{def:wc_graph}.
Property~(A) is trivial.
For (B), note that $\Inp(H[S_1],L[S_1],n) = \Inp(H,T,n)$, and by Lemma~\ref{lem:bfs_graph}, $T$ is indeed a $d$-approximate BFS tree of $H$. 
For property (C), observe that the input tree $T$ is not a $d$-approximate BFS tree of the rewired graph $H^{e_1,e_2}$, for any $e_1$ and $e_2$, because $\dist_{T}(u^{(0)}, u_{k}^{(2d+1)}) = 2d+1$ for $k\geq 1$, whereas $\dist_{H^{e_1,e_2}}(u^{(0)}, u_{l}^{(2d+1)}) = 2$ for $l$ such that $u_{l}^{(2d+1)}$ is incident to $e_2$.
\end{proof}

Theorem~\ref{thm:bfs_kt0_lb} follows by combining Lemma~\ref{lem:bfs_graph} and Lemma~\ref{lem:wc_bfs} with the general lower bound Lemma~\ref{lem:general_kt0_lb}.

\subsection{A Lower Bound for MST Verification} \label{sec:mst_kt0_lb} 

We now show that the lower bound graph used for $d$-approximate BFS tree verification in Section~\ref{sec:bfs_kt0_lb} is versatile enough to also yield a lower bound for MST verification. 

\begin{corollary} \label{cor:bfs_kt0_lb}
Suppose that all nodes know the exact network size. 
Let $\epsilon < \frac{1}{8}$. 
Any $\epsilon$-error randomized algorithm that verifies whether an input is a minimum spanning tree in the $\ktzero$ $\local$ model sends $\Omega(n^2)$ messages in expectation, even if the algorithm only ensures that at least one node detects illegal inputs (i.e.\ $\OneDetects$).
The same result holds for verifying an approximate MST.

\end{corollary}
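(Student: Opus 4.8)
The plan is to reuse the hard base graph $H$ of Section~\ref{sec:bfs_kt0_lb} essentially unchanged, but to swap the BFS labeling for a weighted labeling tailored to MST verification. Fix $d$ to be a small constant (e.g.\ $d=1$), so that each of the two rewirable components — the clique on level $1$ and the clique on level $2d+1$ — has $\gamma=\Theta(n)$ vertices, matching the hypothesis of Lemma~\ref{lem:general_kt0_lb}. I would keep the same spanning tree $T$ as in Lemma~\ref{lem:bfs_graph} (root $u^{(0)}$, all of level $1$ hanging off $u^{(0)}$, the path $u_1^{(1)}-u^{(2)}-\dots-u^{(2d)}$, and all of level $2d+1$ hanging off $u^{(2d)}$) and define the labeling $L$ to consist of the $T$-membership indicators together with edge weights: every inter-level edge between level $1$ and level $2$ receives weight $M$ for a large polynomial $M$ (say $M=n^2$), and every other edge of $H$ receives weight $1$.

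Next I would verify that $T$ is an MST of $H$. The cleanest argument is via the cycle property: the non-tree edges of $H$ are exactly the level-$1$ clique edges (whose fundamental $T$-cycle uses only two weight-$1$ edges through $u^{(0)}$), the level-$(2d+1)$ clique edges (symmetric, through $u^{(2d)}$), and the edges $(u_i^{(1)},u^{(2)})$ with $i\ge2$ (each of weight $M$, whose fundamental cycle has maximum weight exactly $M$). In every case the non-tree edge is no lighter than the heaviest edge on its cycle, so $T$ is optimal; equivalently, deleting all level-$1$-to-$2$ edges disconnects $H$, hence every spanning tree of $H$ costs at least $M$, and $T$ costs exactly $M+(n-2)$. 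Taking $S_1=V(H)$ and $S_2=\emptyset$ as in Lemma~\ref{lem:wc_bfs}, this establishes Property~(B) of Definition~\ref{def:wc_graph} and makes Property~(A) vacuous.

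For Property~(C), fix arbitrary important edges $e_1\in H[A_1]$ and $e_2\in H[A_2]$. Both are weight-$1$ clique edges outside $T$, so in the rewired graph $H^{e_1,e_2}$ the edge set of $T$ is unchanged and still spans, while the two new edges joining level $1$ to level $2d+1$ inherit weight $1$ and the ``not in $T$'' label — here it matters that all clique edges carry the same weight, so no labeling inconsistency arises. But now level $1$ and levels $2,\dots,2d+1$ are connected by a weight-$1$ edge, so $H^{e_1,e_2}$ admits a spanning tree of total weight $n-1$, whereas $w(T)=M+(n-2)$. Hence $T$ is not an MST of $H^{e_1,e_2}$; moreover $w(T)/\mathrm{OPT}(H^{e_1,e_2})=(M+n-2)/(n-1)$ exceeds any prescribed approximation factor $\beta$ as soon as $M>\beta(n-1)$, so choosing $M$ to be a sufficiently large polynomial also rules out $\beta$-approximate MST verification for every polynomially bounded $\beta$. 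With Properties~(A)--(C) in hand and the rewirable components being cliques of size $\Theta(n)$, Lemma~\ref{lem:general_kt0_lb} gives the claimed $\Omega(n^2)$ bound on the expected message complexity in the $\ktzero$ $\local$ model, for any $\epsilon<\frac{1}{6}$, even under $\OneDetects$ and even when $n$ is known exactly.

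The only delicate point — and the place I expect to spend the most care — is choosing the weights so that $T$ is simultaneously an exact MST of $H$ and arbitrarily far from optimal on every rewired graph: making the level-$1$ clique edges cheap is safe precisely because the level-$0$-to-$1$ tree edges are cheap too, and the level-$1$-to-$2$ edges must be made expensive to block a clique-based improving swap inside $H$, which is exactly what makes the light rewiring shortcut in $H^{e_1,e_2}$ effective. Everything else reuses the bookkeeping already done for the $d$-approximate BFS lower bound.
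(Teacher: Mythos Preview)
Your proposal is correct and follows essentially the same approach as the paper: instantiate Lemma~\ref{lem:general_kt0_lb} on the same base graph with $S_1=V(H)$, $S_2=\emptyset$, and a weighting that makes the ``bridge'' between the two cliques expensive so that the cheap rewired edges yield a strictly lighter spanning tree in $H^{e_1,e_2}$. The paper's version is marginally simpler---it inserts one extra single-node layer so that a \emph{single} bridge edge $(u^{(2)},u^{(3)})$ of weight $W$ (with all other weights~$0$) is a cut edge of $H$ and hence trivially lies in every MST---whereas you put the heavy weight on all level-$1$-to-$2$ edges and verify optimality via the cycle property; both arguments work and both extend to approximate MST in the same way.
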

\begin{proof} 
We use a variant of the hard base graph $H$ that we employed in the proof of Theorem~\ref{thm:bfs_kt0_lb}, which is shown in Figure~\ref{fig:bfs_lb}.
The main difference is that now we consider an input labeling $L$ that induces an undirected weighted subgraph, and we consider only $5$ layers, where layer $1$ and layer $4$ are cliques, and the other layers consist of a single node each.  Hence, there is a single edge $e=(u^{(2)},u^{(3)})$ that acts as a bridge between the two cliques.
The edge $e$ has weight $W>1$ whereas every other edge in the graph has weight $0$.
The labeling $L$ that we consider induces an MST that contains $e$ and some arbitrary spanning tree on the rest of the graph.
We now show that this satisfies Properties~(A), (B), and (C) of Definition~\ref{def:wc_graph}: 
We choose $S_1=V(H)$ and $S_2=\emptyset$, which makes Property~(A) vacuously true.
For Property~(B), observe that $\Inp(H[S_1],L[S_1],n)$ corresponds to a valid MST of $H$ and hence is legal. 
To see why Property~(C) holds, it is sufficient to observe that any MST of the rewired graph $H^{{e_1,e_2}}$ must include one of the rewired edges (connecting the two cliques) instead of the bridge edge $e$ that has weight $W$. 
\end{proof}

\subsection{A Lower Bound for Spanning Tree Verification}
\label{sec:general_lb}
We now consider the \emph{Spanning Tree (ST) verification} problem where nodes have an $\alpha$-approximation of the network size.
The input is a connected graph $G$, a subgraph $T$ of $G$, and an integer $\tilde{n}$ which is an $\alpha$-approximation to the actual network size. The verification task requires checking distributively whether $T$ is an ST of $G$, i.e., $T$ is a tree that contains all nodes. 

We remark that a message complexity lower bound of $\Omega(n^2)$ assuming $\AllDetect$ was previously shown by Kor, Korman, and Peleg~\cite{kor2013tight} for deterministic algorithms in the setting where the network size is unknown. By using our framework, we generalize the message complexity lower bound to randomized algorithms and to the setting where nodes have an $\alpha$-approximation of the network size. 
We also show that the bound depends on whether we assume $\OneDetects$ or $\AllDetect$.

\begin{theorem} \label{thm:mstv_all}
Consider an $\epsilon$-error randomized algorithm $\mathcal{A}$ that solves spanning tree verification in the $\ktzero$ $\local$ model, where $\epsilon<\frac{1}{8}$, and suppose that nodes know an $\alpha$-approximation of the network size: 
\begin{compactitem}
\item If all nodes detect an illegal input (i.e.\ $\AllDetect$), then the message complexity is $\Omega(n^2)$ in expectation, for any constant $\alpha>1$. 
\item If the algorithm satisfies only $\OneDetects$, then the expected message complexity is still $\Omega(n^2)$, for any constant $\alpha \ge \sqrt{2}$. 
\end{compactitem}
\end{theorem}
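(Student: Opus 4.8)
The plan is to instantiate the general lower bound of Lemma~\ref{lem:general_kt0_lb}: it suffices to exhibit, for a suitable integer $\tilde n$ that is a legitimate $\alpha$-approximation of all the relevant sizes (here I use the standard multiplicative notion, $\tilde n$ is an $\alpha$-approximation of $N$ iff $N/\alpha \le \tilde n \le \alpha N$), a hard base graph $H$ in the sense of Definition~\ref{def:wc_graph} whose two rewirable components are cliques of size $\Theta(n)$. I would take $H = K_{n_1} \sqcup K_{n_2}$, the disjoint union of two cliques with $n_1+n_2=n$ and $n_1,n_2=\Theta(n)$ to be pinned down below, set $A_1=S_1=V(K_{n_1})$ and $A_2=S_2=V(K_{n_2})$, and let the labeling be $L=T_1\cup T_2$, where $T_i$ is any spanning tree of $K_{n_i}$. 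Then $H$ is rewirable, its rewirable components are induced cliques of size $\Theta(n)$ as required, and each of $H[S_1],H[S_2]$ carries a spanning tree of a clique.

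Next I would verify the three properties of Definition~\ref{def:wc_graph}. Property~(A) is immediate since $H$ has no edges between the two cliques, so a node in $S_i$ has the same initial state and the same view in $H$ as in $H[S_i]$. Property~(C) holds for every pair of important edges $e_1\in K_{n_1}$, $e_2\in K_{n_2}$: the rewired graph $H^{e_1,e_2}$ is connected on $n$ nodes, but $L$ induces at most $n-2$ edges in it (a two-component forest, possibly missing one edge deleted by the rewiring), so $L$ is not a spanning tree of $H^{e_1,e_2}$. The only subtle point is Property~(B), which asks that certain single-clique inputs be \emph{legal} $\mathcal{P}$-instances; because an $\alpha$-approximation ST-verification input is only well-formed when its size estimate is a valid $\alpha$-approximation of the true size, this forces $\tilde n$ to approximate $n_i$, and (so that Lemma~\ref{lem:general_kt0_lb} can derive its contradiction on $H^{e_1,e_2}$) also $n$ itself.

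This is where the two regimes split. For $\AllDetect$, Property~(B) only requires \emph{one} of $H[S_1],H[S_2]$ to be legal, so I only need a single $\tilde n$ that simultaneously $\alpha$-approximates $n_1$ and $n$: take $n_1=\lceil n/\alpha\rceil$, $n_2=n-n_1$, and $\tilde n=n$. Then $\tilde n=n$ trivially approximates $n$, it approximates $n_1$ since $n\le \alpha n_1$, and $n_2=n-\lceil n/\alpha\rceil=\Theta(n)$ for every constant $\alpha>1$; so Lemma~\ref{lem:general_kt0_lb} gives $\Omega(n^2)$ for all $\alpha>1$. For $\OneDetects$, Property~(B) requires \emph{both} $H[S_1]$ and $H[S_2]$ to be legal, so $\tilde n$ must $\alpha$-approximate $n_1$, $n_2$, and $n$. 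Balancing the cliques, $n_1=n_2=n/2$ (which maximizes the smaller side and hence relaxes the constraint the most), the binding requirements $\tilde n\ge n/\alpha$ (from $n$) and $\tilde n\le \alpha n/2$ (from the halves) are compatible exactly when $n/\alpha\le \alpha n/2$, i.e.\ $\alpha\ge\sqrt2$; any unbalanced split only shrinks $\min(n_1,n_2)$ and makes things worse, so $\alpha\ge\sqrt2$ is precisely the range in which this construction works — consistent with the $O(n\log n)$ algorithm that exists for $1<\alpha<\sqrt2$. In that range $\tilde n=\lceil n/\alpha\rceil$ suffices, and Lemma~\ref{lem:general_kt0_lb} again yields $\Omega(n^2)$ in expectation.

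Since the framework carries the analytic weight, the main obstacle is not deep but is the crux: one must select a single size estimate $\tilde n$ that is simultaneously a valid $\alpha$-approximation of $n$ (so the rewired instance, whose labeling is a forest rather than a spanning tree, is one the algorithm is \emph{obliged to reject}) and of the surviving sub-instance(s) (so those are instances it is \emph{obliged to accept}), all while keeping both cliques of size $\Theta(n)$. The tension between approximating $n$ from below and approximating the $\approx n/2$-sized halves from above is exactly what produces the clean $\alpha>1$ versus $\alpha\ge\sqrt2$ dichotomy; the only genuinely delicate spots are the boundary case $\alpha=\sqrt2$ and the integrality of $n_1,n_2,\tilde n$, which can be absorbed by a $\pm O(1)$ adjustment of the clique sizes.
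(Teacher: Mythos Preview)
Your proposal is correct and follows essentially the same route as the paper: the paper likewise takes $H$ to be the disjoint union of two cliques with $S_i=A_i=V(K_{n_i})$ and $L=T_1\cup T_2$, then verifies Properties~(A)--(C) of Definition~\ref{def:wc_graph} and invokes Lemma~\ref{lem:general_kt0_lb}. The only cosmetic differences are in the particular parameter choices---the paper parametrizes by $t$ with $|C|=t$, $|C'|=(\alpha-1)t$, $\tilde n=t$ for $\AllDetect$, and $|C|=|C'|=t$, $\tilde n=\alpha t$ for $\OneDetects$---which, after substituting $n=\alpha t$ (resp.\ $n=2t$), are equivalent to your $n_1\approx n/\alpha$, $\tilde n=n$ (resp.\ $n_1=n_2=n/2$, $\tilde n\approx n/\alpha$); both lead to the same feasibility constraint $\alpha\ge\sqrt2$ in the $\OneDetects$ case.
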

We emphasize that the bounds on the approximation ratio in Theorem~\ref{thm:mstv_all} are tight, since in Section~\ref{sec:st_algo} we show that, if $\alpha < \sqrt{2}$, then the message complexity reduces drastically to just $O(n\log n)$ for $\OneDetects$,
and this holds even for $\AllDetect$ in the case where $\alpha = 1$.

Combining the following lemma with Lemma~\ref{lem:general_kt0_lb} immediately yields Theorem~\ref{thm:mstv_all}:

\begin{lemma} \label{lem:wc_graph_mst}
There exists a graph $H$ that is a hard base graph for ST verification with a suitable labeling where the rewireable components are cliques of size $\Omega(n)$, assuming the following restriction on the network size approximation known by the nodes:
\begin{compactenum}
\item[(i)] For $\AllDetect$, this holds for any constant $\alpha > 1$.
\item[(ii)] For $\OneDetects$, this holds for any constant $\alpha \ge \sqrt{2}$.
\end{compactenum}
\end{lemma}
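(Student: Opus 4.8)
The plan is to instantiate the hard-base-graph machinery of Definition~\ref{def:wc_graph} with a \emph{disconnected} graph made of two vertex-disjoint cliques, and then appeal to Lemma~\ref{lem:general_kt0_lb}. Take $H$ to be the disjoint union of two cliques $K_{n_1}$ and $K_{n_2}$ with $n_1+n_2=n$ and $n_1,n_2=\Theta(n)$; in each $K_{n_i}$ single out a vertex $r_i$ and set $A_i = V(K_{n_i})\setminus\{r_i\}$, so the rewirable components $H[A_1]=K_{n_1-1}$ and $H[A_2]=K_{n_2-1}$ are cliques of size $\Theta(n)$ as Lemma~\ref{lem:general_kt0_lb} requires. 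Put $S_i=V(K_{n_i})$, so $S_1\cup S_2 = V(H)$, and let the labeling $L$ be the union of the two spanning stars centered at $r_1$ and $r_2$. The key feature of $L$ is that every edge of $L$ is incident to some $r_i$; hence $L$ uses no edge inside $A_1$ or $A_2$, and in particular avoids \emph{every} possible pair of important edges $e_1\in H[A_1]$, $e_2\in H[A_2]$.

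Given this setup, Properties~(A) and~(C) of Definition~\ref{def:wc_graph} are almost immediate. Since $H$ has no edge between $S_1$ and $S_2$, a node of $S_i$ has the same port set, incident labels, and size estimate in $H[S_i]$ as it does in $H$, so by the argument of Lemma~\ref{lem:node_view} the two executions are indistinguishable for every node of $S_i$, which gives~(A). For~(C), in any rewired graph $H^{e_1,e_2}$ the two cliques are merged into a single connected graph on $n$ vertices, whereas $L$ --- which is unchanged, since it avoids $e_1$ and $e_2$ --- is still a spanning forest with $n-2$ edges and two components, hence not a spanning tree of $H^{e_1,e_2}$; this does not depend on which important edges were selected, so~(C) holds for all of them.

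Everything else, including the split between the two regimes, lives in Property~(B). Since $L[S_i]$ is a spanning tree of $H[S_i]=K_{n_i}$, the sub-instance $\Inp(H[S_i],L[S_i],\tilde n)$ is legal exactly when $\tilde n$ is a valid $\alpha$-approximation of $n_i$; as the nodes must in any case be given an $\alpha$-approximation of the true size $n$, the adversary may choose any $\tilde n$ in the intersection of the approximation range of $n$ with the approximation ranges of whichever $n_i$ it needs to be legal, and under the (two-sided, multiplicative) $\alpha$-approximation convention the intersection with a single $n_i$ is nonempty iff $n_i\ge n/\alpha^2$. For $\AllDetect$ only one sub-instance must be legal, so we may take $n_1 = c_\alpha n$ for a constant $c_\alpha\in(\alpha^{-2},1)$ and $n_2=n-n_1$: the $S_1$-side is legal for every $\alpha>1$, and $n_1,n_2=\Theta(n)$. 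For $\OneDetects$ both sub-instances must be legal, forcing $n_1\ge n/\alpha^2$ and $n_2\ge n/\alpha^2$ at once; with $n_1+n_2=n$ this is feasible iff $\alpha\ge\sqrt{2}$, and then the balanced split $n_1=n_2=n/2$ together with a common $\tilde n$ works. This proves the lemma, and combining it with Lemma~\ref{lem:general_kt0_lb} yields Theorem~\ref{thm:mstv_all}. The only delicate part I anticipate is the bookkeeping of these approximation ranges --- in particular, getting the $\OneDetects$ threshold down to exactly $\sqrt{2}$ rather than $2$ hinges on the two-sided convention and on the adversary being free to pick the worst-case estimate --- together with the (routine but easy to mishandle) check that~(C) survives an adversarial choice of important edges, which is exactly why the star labeling, containing no clique-internal edge, is the right one.
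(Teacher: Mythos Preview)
Your proposal is correct and follows essentially the same approach as the paper: two disjoint cliques with spanning-tree labelings inside each, sizes and $\tilde n$ chosen so the approximation ranges line up for one side (\AllDetect) or both sides (\OneDetects), yielding the $\sqrt{2}$ threshold. The only cosmetic difference is that you fix the labeling to be stars and excise the centers from the rewirable components so that important edges can never be labeled, whereas the paper takes $A_i$ to be the full clique and allows an arbitrary spanning tree; your choice makes Property~(C) marginally cleaner but is not needed for the argument.
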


\begin{proof} 
We define a suitable lower bound graph $H$ of size $n$ that is a disconnected graph whose components are two cliques $C$ and $C'$. 
We will specify the sizes of $C$ and $C'$ below.
To obtain a labeling, we define $T$ and $T'$ to be spanning trees of the subgraphs $C$ and $C'$, respectively. 
To make $H$ rewirable, we fix $A_1 = V(C)$ and $A_2 = V(C')$, i.e., the important edges will connect the cut $(C,C')$ when being rewired.
We also fix the sets $S_1 = V(C)$ and $S_2 = V(C')$ as required by Def.~\ref{def:wc_graph}. 

First, we show Case~(i) which is for $\AllDetect$. Let $t= \frac{n}{\alpha}$. We define $C$ to be of size $t$ and $C'$ to be of size $(\alpha-1) t$. For the sake of readability, we assume that $t$ and $(\alpha-1)t$ are integers. Note that this assumption does not affect the asymptotic bounds. 
We equip nodes with the approximate network size $\tilde{n}$ = $t$. 
It is straightforward to verify that for $\alpha>1$, $\tilde{n}$ is indeed an $\alpha$-approximation to $|H|$ and $|S_1|$ as required by Def.~\ref{def:wc_graph}. 
Furthermore, since $\alpha$ is a constant, it follows that the rewirable components $A_1$ and $A_2$ each have size $\Omega(n)$. 
It remains to show that Property~(A), (B) and (C) of Def.~\ref{def:wc_graph} are satisfied.
Property~(A) holds  since $C$ and $C'$ are (disjoint) components. 
Thus, we only need to focus on Properties~(B) and (C). 
For Property~(B), consider the input $\Inp(H[S_1],L[S_1],\tilde{n}) = \Inp(C,T,t)$ which is legal since $T$ is a spanning tree for $C$. 
For Property~(C), consider any important edges $e_1$ and $e_2$. 
Graph $H^{e_1,e_2}$ is connected, but the subgraph induced by the labeling $L := T \cup T'$ is not. 
Thus the input $\Inp(H^{e_1,e_2},T \cup T', t)$ represents an illegal labeling.
This completes the proof of Case (i).

Next, we show Case~(ii) which is for $\OneDetects$. Let $t = \frac{n}{2}$.  We define $C$ and $C'$ to be of size $t$ each. Again, we assume $t$ is an integer. It follows that the rewirable components $A_1$ and $A_2$ each have size $\Omega(n)$. 
We equip nodes with the approximate network size $\tilde{n} = \alpha t$.  
It is trivial to see that for $\alpha \ge 1$, $\tilde{n}: = \alpha t$ is an $\alpha$-approximation to $t$ which is the size of $S_1$ and $S_2$.  
However, for $\tilde{n}: = \alpha t$ to be an $\alpha$-approximation to $|H|$ which has value $2t$, we require that $\tilde{n} \in [n/\alpha,\alpha\,n]$, which holds if and only if 
\[
\frac{2t}{\alpha} \le \tilde{n} = \alpha\cdot t \le 2\alpha\, t.
\]
In particular, the left inequality is true for any $\alpha \ge \sqrt{2}$, as required. 
We can show that Property~(A), (B) and (C) hold using the same argument as for Case (i),   
\end{proof}

\section{A Lower Bound for $d$-Approximate BFS Verification in $\mathsf{KT}_\rho$ ($\rho \ge 1$)} 
\label{sec:ktrho}

In this section, we consider $\ktx$, where $\rho \ge 1$. 
In $\mathsf{KT}_{\rho}$, each node $x$ is provided the initial knowledge of the IDs of all nodes at distance at most $\rho$ from $x$
and the neighborhood of every vertex at distance at most $\rho-1$ from $x$. 
For instance, when considering the special case $\ktone$, a node simply knows the IDs of all its neighbors. 
Formally, we define  $S_\rho(x) = \cup_{0\leq r \leq \rho} C_r(x)$, where $C_r(x)$ consists of all nodes that are at distance $r$ from $x$. 
Let $E_\rho(x) = \{(y,z) \in E \mid y, z \in S_\rho(x)\} \setminus \{(y,z) \in E \mid y, z \in C_\rho(x)\}.$ 
The \emph{$\rho$-neighborhood} of $x$ is the subgraph $N_\rho(x) = (S_\rho(x), E_\rho(x))$.
 
The indistinguishability strategy used for deriving the general $\ktzero$ lower bound (Lemma~\ref{lem:general_kt0_lb}) does not hold for $\ktx$. 
To begin with, one of the essential properties of our lower bound argument for $\ktzero$ is that a rewirable base graph has an identical initial state with any of its rewired graph, does not hold in $\ktx$.
This is because each node initially knows the IDs of its neighbors, and by construction, a node incident to important edges will have different neighbors in the rewired graph compared to the base graph. 

To mitigate this issue, we restrict ourselves to comparison-based algorithms, which operate under the following restrictions: 
We assume that each node has two types of variables: \emph{ID variables} and \emph{ordinary variables}. 
Initially, ID variables contain the IDs known to the node before the start of the execution of a protocol. 
The ordinary variables initially contain some constants known to the node, for example, an approximation of the network size $\tilde{n}$. 
The \emph{state} of a node $u$ consists of the combined lists of ID and ordinary variables.

The local computations of a node is limited to comparing ID variables and storing the result of the comparison in an ordinary variable,
as well as performing some arbitrary computations on its ordinary variables. 
Essentially, the specific values of the IDs learned by a node do not influence the outcome of the comparison-based distributed algorithm; instead, the relative ordering of the IDs is what matters.
Note that these are standard assumptions for comparison-based algorithms (see  e.g., \cite{awerbuch1990trade,frederickson1987electing,pai2021can}). We need the following definition.  
\medskip
\begin{definition}[order-equivalent]
Consider a graph $H$ and two ID assignments  $\phi$ and $\psi$ of $V(H)$. 
We say that $\phi$ and $\psi$ are \emph{order-equivalent} if for any two distinct nodes $x$ and $y$ of $V(H)$\footnote{Recall that our model assumes that each node has a unique ID.}, the following holds: 
$\phi(x) < \phi(y)$ if and only if $\psi(x) < \psi(y)$. 
Let $G$ be an isomorphic graph of $H$, ignoring ID assignments. 
We say that $G$ and $H$ are order-equivalent if the ID assignments of $G$ and $H$ are order-equivalent. 
\end{definition}

Recall that in $\ktx$, the initial state of a node is defined by its $\rho$-neighborhood. 
Hence, when we consider a deterministic comparison-based algorithm, if the $\rho$-neighborhood of two nodes are order-equivalent, then they will behave the same initially. 
This observation allows us to generalize the indistinguishability strategy used for $\ktzero$ to $\ktx$. 

Throughout this section, we consider comparison-based algorithms in the $\mathsf{KT}_{\rho}$, $\congest$ model, where each message sent by a node may include $B=O(1)$ node IDs. Apart from the result for broadcast of \cite{awerbuch1990trade}, we are not aware of any other superlinear (in $n$) message complexity lower bounds in the $\kt_\rho$ setting that hold for $\rho\ge2$.\footnote{We point out that \cite{awerbuch1990trade} only provides a full proof of their result for the special case $\rho=1$.}

In the following, we provide an informal discussion of our lower bound approach for $\ktx$: 
Similar to the indistinguishability strategy for $\ktzero$, we consider a base graph and a rewired variant of the base graph. 
We argue that the executions of any comparison-based algorithm on both graphs are indistinguishable to the nodes, as long as the important edges are not ``utilized". In $\ktzero$, an edge is ``utilized" if it is used to send or receive messages. However, in $\ktx$, this notion of utilization is insufficient.

In \cite{awerbuch1990trade}, the concept of ``connectedness" is introduced to define the utilization of an edge, while the ``charging rule" is presented to bound the message complexity based on the number of utilized edges. However,  the argument in the lower bound approach of \cite{awerbuch1990trade} crucially exploits the assumption that their base graph is disconnected. For showing a lower bound for the $d$-approximate BFS tree verification problem, we need the base graph as well as the rewired graphs to be connected. Thus, we need to introduce different rules for \emph{utilizing} and \emph{charging} edges, which have the added benefit of leading to a significantly simplified indistinguishability proof, as we will see in the proof of Lemma~\ref{lem:ktx_indistinguishable} below.

Even though it is relatively straightforward to generalize our approach to other verification problems such as spanning tree verification, here we exclusively focus on verifying a $d$-approximate BFS tree, which is more challenging in terms of techniques, as it requires us to deal with \emph{connected} graphs. As mentioned above, the existing strategy of \cite{awerbuch1990trade} does not apply to connected base graphs.
In the remainder of this section, we prove Theorem~\ref{thm:ktx_bfs_lb}.

\begin{theorem}  \label{thm:ktx_bfs_lb}
Let $0 < \beta < 1$ and $0 < \epsilon < \frac{1}{4}$ be constants. 
Consider any $\rho\ge1$, and $\frac{3\rho-1}{\rho+1} \le d \le \frac{(1-\beta)n}{4\rho-2}$.
For any $\epsilon$-error randomized comparison-based algorithm that solves $d$-approximate BFS tree verification under the $\mathsf{KT}_{\rho}$ assumption with the guarantee that at least one node detects illegal inputs (i.e., $\OneDetects$), there exists a network where its message complexity is $\Omega\left(\frac{1}{\rho}\left(\frac{n}{\rho}\right)^{1+\frac{c}{\rho}}\right)$, for some constant $c >0$. 
This holds even when all nodes know the exact network size.
For the special case $\rho=1$, we obtain a lower bound of $\Omega\lt( n^2 \rt)$.
\end{theorem}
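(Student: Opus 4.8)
## Proof Proposal for Theorem~\ref{thm:ktx_bfs_lb}

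The plan is to stay within the template of Section~\ref{sec:general_kt0_lb}: build a single connected \emph{hard base graph} $H$ together with a labeling $T$ for which Properties~(B) and~(C) of Definition~\ref{def:wc_graph} hold (a legal $d$-approximate BFS tree that becomes illegal under \emph{every} rewiring of a pair of important edges), and then quantify how many rewirings are simultaneously available, so that a counting/charging argument turns message-efficiency into a contradiction. For the graph, I would use a connected, directed, layered graph in the spirit of Figure~\ref{fig:bfs_lb}: a ``spine'' of $\Theta(d)$ levels carrying the candidate BFS tree $T$ rooted at one end, with a collection of \emph{rewirable cliques} hung off the spine whose sizes and mutual spacing scale with $\rho$. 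The hypothesis $d \le \frac{(1-\beta)n}{4\rho-2}$ is exactly what leaves enough vertex budget to make the cliques large while still spacing them more than $2\rho$ apart along the spine and keeping them at spine-depth $>\rho$ from both ends; for $\rho=1$ the construction collapses to the case of (essentially) two cliques of size $\Theta(n)$ joined by a spine of length $2d$. As in Lemma~\ref{lem:wc_bfs}, rewiring a pair of important edges in two distinct cliques inserts a shortcut that collapses a root-to-leaf distance of $2d+1$ down to $2$, so $T$ remains a $d$-approximate BFS tree of $H$ (Definition~\ref{def:dBFS}) yet violates the stretch bound on every rewired graph, giving~(B) and~(C); Property~(A) is again made vacuous by taking $S_1 = V(H)$.

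The first new difficulty, absent from the $\ktzero$ $\local$ setting, is that under $\kt_\rho$ a node knows the topology and IDs inside its $\rho$-ball, so the purely topological indistinguishability of Lemma~\ref{lem:node_view} no longer suffices --- a rewired edge can enlarge or otherwise alter an endpoint's $\rho$-ball. This is handled by two design choices. The spacing ensures that in $H$ the $\rho$-ball of any node meets only $O(1)$ rewirable cliques and consists of (a piece of) one clique together with generic path-stubs along the spine, so it carries no information identifying \emph{which} clique or \emph{where} on the spine; and because the cliques are mutually symmetric, comparison-basedness lets us choose the ID assignment (equivalently, draw the labeling from a symmetrized distribution) so that, for any \emph{fixed} rewiring, every affected $\rho$-ball has the same comparison type after the rewiring as before. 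Together these give the $\kt_\rho$, comparison-based analogue of Lemma~\ref{lem:rewired}: conditioned on the event that no message is ever sent over an important edge, the execution of $\mathcal{A}$ on the (relabeled) rewired instance is equal in distribution to its execution on $\Inp(H,T,n)$, so $\mathcal{A}$ must accept the illegal rewired instance with the same probability it accepts $H$, contradicting $\epsilon<\frac16$.

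The heart of the proof --- and what I expect to be the main obstacle --- is the \emph{charging argument} that upgrades ``few messages $\Rightarrow$ some rewiring is never probed'' into the superlinear bound $\Omega\!\left(\frac{1}{\rho}\left(\frac{n}{\rho}\right)^{1+c/\rho}\right)$. A single message between nodes $x$ and $y$ can only preclude rewirings whose important edges lie on a clique met by the $\rho$-ball of $x$ or of $y$, and by the spacing this is $O(1)$ cliques; a plain covering bound against this alone yields only a near-linear bound, so the cliques must be organized \emph{hierarchically} (a recursively-thickened spine of depth $\Theta(\log_\rho(n/\rho))$, so that a correct algorithm is forced to distinguish a number of rewirings growing like $(n/\rho)^{1+c/\rho}$), and the charging must be run level by level with an amortized/potential-function accounting in which each message is charged a bounded amount per level. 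Keeping the graph \emph{connected} throughout --- unlike the disconnected base graphs underlying the broadcast lower bound of \cite{awerbuch1990trade}, whose charging argument relies on disconnectedness --- is what forces the spine and the extra $\Theta(\rho d)$ vertices, and hence the restriction on $d$; the leading $\frac{1}{\rho}$ is the price of a message's ability to ``see'' an entire $\rho$-ball.

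For the special case $\rho=1$ the machinery simplifies: the $1$-ball of a message is just its two endpoints, so each message precludes only $O(1)$ clique-pairs, the hierarchy collapses to a single level, and the counting reduces to exactly the $\Theta(n^2)$-many-edges argument behind Lemma~\ref{lem:general_kt0_lb}, yielding $\Omega(n^2)$ messages; the only addition relative to $\ktzero$ is the comparison-based relabeling of the two $\Theta(n)$-size cliques described above.
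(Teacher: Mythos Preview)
Your overall template---hard base graph with a legal BFS-tree labeling, rewiring to make it illegal, comparison-based indistinguishability via a carefully chosen ID assignment---is right, and your treatment of $\rho=1$ is essentially what the paper does. But for $\rho\ge 2$ there is a genuine gap: the paper does \emph{not} use cliques, does \emph{not} use a hierarchy of cliques, and does \emph{not} run a level-by-level potential-function charging. The exponent $1+\tfrac{c}{\rho}$ has nothing to do with recursion depth; it comes from extremal graph theory. Concretely, the two rewirable components $C_\rho$ and $C_\rho'$ in the paper are isomorphic graphs on $\gamma=\Theta(n/\rho)$ vertices of girth greater than $2\rho$, which are known to exist with $\Omega(\gamma^{1+c/\rho})$ edges (cf.\ \cite{bollobas2004extremal}). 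High girth buys the ``$\rho$-unique shortest path'' property, which is what makes both the indistinguishability proof and the charging work. With cliques (your proposal) this property fails for $\rho\ge2$: a node in a clique has many length-$2$ paths to any other node, so you cannot attribute knowledge of an ID to a single ``important'' edge, and the indistinguishability argument breaks.

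The charging is also different in kind from your spatial ``a message's $\rho$-ball meets $O(1)$ cliques'' picture. An edge $e\in C_\rho$ is declared \emph{utilized} if a message crosses it, or if some node $x$ sends or receives the ID of a node $z$ (or its copy $\tilde z$) such that the unique $x$--$z$ shortest path passes through $e$. A message on $(x,y)$ carrying $B=O(1)$ IDs then charges at most $1+4B\rho$ edges (the edge itself plus, for each carried ID $z$, the at most $\rho$ edges on the unique shortest path from each of $x,y$ to each of $z,\tilde z$). Hence $M$ messages utilize $O(\rho M)$ edges; if $M=o(\tfrac{1}{\rho}\gamma^{1+c/\rho})$ there remains an edge $e\in C_\rho$ with both $e$ and $\tilde e$ unutilized, and the ID assignment (copies differ by exactly one, preserving order) gives order-equivalent $\rho$-neighborhoods in $H$ and $H^{e,\tilde e}$. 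Your ``recursively-thickened spine'' is unnecessary and, as stated, would not obviously yield the bound; the whole weight is carried by the high-girth construction plus this ID-based charging.
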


We remark that our message complexity lower bounds are determined by the number of edges of the rewirable components. 
For $\rho=1$, the rewirable components are cliques with $\Omega(n^2)$ edges. 
For $\rho>1$, the rewirable components are graphs with girth at least $2\rho$ and with $\Omega\lt(\lt({n}/{\rho}\rt)^{1+\frac{c}{\rho}}\rt)$ edges. 
When $\rho$ is a constant, this results in a lower bound on message complexity of the form $n^{1+\Omega\lt( 1 \rt)}$. However, as $\rho$ approaches $\log(n)$, the lower bound deteriorates to the trivial bound of $\Omega(n)$ on the message complexity.

\paragraph*{The Lower Bound Graph Family}
For a given $\rho \ge 1$ and $d \geq \frac{3\rho-1}{\rho+1}$, we construct an infinite family of $n$-node graphs (and their respective rewired variants) that yield the claimed bound of Theorem~\ref{thm:ktx_bfs_lb}. 
We consider a graph $H$ of $n$ vertices that resembles the lower bound graph that sufficed for the $\kt_0$ assumption (see Lemma~\ref{lem:bfs_graph}).
In the $\kt_\rho$ setting, a crucial difference is that, in order to show that 
the initial state of a node $x$ has an order-equivalent $\rho$-neighborhood in $H$ and in $H^{e,\tilde{e}}$, we need to add $\rho-1$ layers before and after each of the rewirable components $A_1$ and $A_2$. 

Figure~\ref{fig:ktrho} on page~\pageref{fig:ktrho} gives an example of the graph construction that we now describe in detail:
The vertices of $H$ are partitioned into $(\rho+1)d+\rho+1$ \emph{levels} numbered $0,\dots,(\rho+1)d+\rho$. 
Let $N = [1,2\rho-1]$  and $N' = [(\rho+1)d-\rho+2,(\rho+1)d+\rho]$.
The vertices of levels in $N$ correspond to the nodes of the first $2\rho-1$ layers after the layer $0$. 
The vertices of levels in $N'$ correspond to the nodes of the last $2\rho-1$ layers.
Let $\gamma = k n$ where the value of $k$ is specified later.\footnote{For the sake of readability, we assume that $\gamma$ is an integer, which does not affect the asymptotic bounds.}
For each $i \in N \cup  N'$, level $i$ contains $\gamma$ nodes each, denoted by $u_1^{(i)},\ldots,u_{\gamma}^{(i)}.$
All the other levels consist of only a single node, and we use $u^{(i)}$ to denote the (single) node on level $i$ for all $i \in [0,(\rho+1)d+\rho ] \setminus  \{N \cup  N'\}.$ 

Next, we define the edges of $H$:
There is an edge from the root to every node in level $1$. 
For each $i \in N \setminus \{1\} \cup  N' \setminus \{(\rho+1)d-\rho+2\}$, 
there is an edge from nodes $u^{(i-1)}_j$ to $u^{(i)}_j$, for $1 \leq j\leq \gamma$. 
For each $i \in [2\rho,(\rho+1)d-\rho+2]$, all nodes at level $i-1$ have an edge to every node at level $i$; note that a level may only contain one such node. 
The nodes at level $\rho$ form a subgraph $C_\rho$ (defined below) and, similarly, the nodes at level $d(\rho+1)+1$ form another subgraph $C_\rho'$ which is isomorphic to $C_\rho$, when ignoring node IDs. 
We call the edges that connect nodes at different levels the \emph{inter-level edges}. 

For a node $x$ of level $i$ in $N$ (resp.\ $x$ of level $i$ in $N'$), we use the notation $\tilde{x}$ to refer to its copy in level $i$ of $N'$ (resp.\ in level $i$ of $N$), and we extend this notation to the edges between the nodes in levels $N$ and between nodes in levels of $N'$ in the natural way.  
An edge $e = (u,v) \in C_\rho$ induces the \emph{rewired graph} $H^{e,\tilde{e}}$, where $\tilde{e} = (\tilde{u}, \tilde{v})$ is $e$'s copy in $C_\rho'$.

Let $T$ be the directed tree, which contains all nodes of $H$ and all the directed inter-level edges from level $i$ to $i+1$ for all $i$ except for $i=2\rho-1$, in which case, only the edge $(u_1^{(2\rho-1)}, u^{(2\rho)})$ is in $T$.   
Then, $T$ is indeed a ($1$-approximate) BFS tree of $H$.
However, $T$ is not a $d$-approximate BFS tree of $H^{e,\tilde{e}}$ because  
$\dist_{T}(u^{(0)}, u_k^{({d(\rho+1)+1})}) = d(\rho+1)+1$ for $k\geq 1$, whereas $\dist_{H^{e,\tilde{e}}}(u^{(0)}, u_i^{({d(\rho+1)+1})}) = \rho+1$ for $u_i^{({d(\rho+1)+1})}$ which is incident to $e$ or $\tilde{e}$. 
Hence the algorithm must produce different outputs on these two graphs when the input labeling is $T$.

\paragraph*{The subgraph $C_\rho$}
Intuitively speaking, the achieved message complexity lower bound will be determined by the number of edges in $C_\rho$, and thus we aim to make $C_\rho$ as dense as possible.
For $\rho = 1$, we simply define $C_1$ (and hence also $C_1'$) to be the clique on $\gamma$ vertices. 

To simplify our argument needed for the lower bound result, we require that $C_\rho$ satisfies the \emph{$\rho$-unique shortest path} property:
\medskip
\begin{definition}[$\rho$-unique shortest path] \label{def:unique_shortest_path}
A graph is said to satisfy the \emph{$\rho$-unique shortest path} property if for any two nodes that are of distance at most $\rho$ from each other, there exists a unique path of that length between the two nodes. 
\end{definition}
\medskip

This trivially holds for $\rho=1$ since $C_1$ and $C_1'$ are cliques. 
For $\rho \ge 2$, we define $C_\rho$ (and hence also $C_\rho'$) to be a subgraph of $\gamma$ nodes with girth greater than $2\rho$ and with $\Omega(\gamma^{1+\frac{c}{\rho}})$ edges for a constant $c > 0$. Recall that the \emph{girth} is the length of the shortest cycle in a graph and that a graph with a girth greater than $2\rho$ has the $\rho$-unique shortest path property that we specified earlier. 
The existence of such graphs $C_\rho$ is known, see \cite{bollobas2004extremal}.
 
We summarize the properties of $H$ in the following lemma: 
\medskip
\begin{lemma} \label{lem:ktrho_graph} 
The graph $H$ contains subgraphs $C_\rho$ and $C_\rho'$ which satisfy the $\rho$-unique shortest path property, and $|V(C_\rho)| = |V(C_\rho')| = \gamma$.
If $\rho=1$, the subgraph $C_1 \cup C_1'$ has $\Omega(n^2)$ edges.
If $\rho\ge2$, the subgraph $C_\rho\cup C_\rho'$ has $\Omega(\gamma^{1+\frac{c}{\rho}})$ edges for some constant $c>0$. %
The directed tree $T$, which contains all nodes of $H$ and all directed inter-level edges from level $i$ to $i+1$ for all $i$ except for $i=2\rho-1$, in which case, only the edge $(u_1^{(2\rho-1)}, u^{(2\rho)})$ is in $T$,  is a $d$-approximate BFS tree of $H$ but not of any rewired graph $H^{e,\tilde{e}}$. 
\end{lemma}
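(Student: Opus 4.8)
The plan is to verify each claimed property of $H$ in turn, most of which follow directly from the construction and from results cited earlier in the section. First I would establish the $\rho$-unique shortest path property of $C_\rho$ and $C_\rho'$: for $\rho = 1$ this is immediate since a clique has a unique path of length $1$ between adjacent vertices and no two distinct vertices are within distance $1$ in any other way; for $\rho \ge 2$ it follows from the fact that $C_\rho$ has girth greater than $2\rho$, since two distinct paths of length at most $\rho$ between the same pair of endpoints would together form a closed walk of length at most $2\rho$, hence contain a cycle of length at most $2\rho$, contradicting the girth bound. The cardinality claim $|V(C_\rho)| = |V(C_\rho')| = \gamma = kn$ with $k$ as in \eqref{eq:gamma_k} is true by definition of $\gamma$ and the construction.

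Next I would address the edge counts. For $\rho = 1$, $C_1 \cup C_1'$ consists of two disjoint cliques on $\gamma = kn$ vertices each, and since $k = \frac{1}{2} - O(d/n)$ is bounded below by a positive constant (using $d \le \frac{(1-\beta)n}{4\rho - 2}$), this gives $2\binom{\gamma}{2} = \Omega(\gamma^2) = \Omega(n^2)$ edges. For $\rho \ge 2$, the existence of a graph on $\gamma$ vertices with girth greater than $2\rho$ and $\Omega(\gamma^{1+c/\rho})$ edges is exactly the extremal result of \cite{bollobas2004extremal}, so $C_\rho$ (and its isomorphic copy $C_\rho'$) contribute $\Omega(\gamma^{1+c/\rho})$ edges in total.

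Finally I would verify the statement about the directed tree $T$. Since $T$ contains all nodes of $H$ and every directed inter-level edge, and since the inter-level edges by construction form a tree rooted at $u^{(0)}$ in which level-$i$ vertices are reached in exactly $i$ hops, $T$ realizes distances equal to the level index, which coincides with $\dist_H(u^{(0)}, \cdot)$; hence $T$ is a ($1$-approximate, and a fortiori $d$-approximate) BFS tree of $H$. For $H^{e,\tilde{e}}$: rewiring replaces $e = (u,v) \in C_\rho$ and $\tilde{e} = (\tilde u, \tilde v) \in C_\rho'$ by cross edges, creating a short path from level $\rho$ to level $d(\rho+1)+1$. This does not change $T$ (which still uses only inter-level edges), so $\dist_T(u^{(0)}, u_k^{(d(\rho+1)+1)}) = d(\rho+1)+1$ still holds, but in $H^{e,\tilde e}$ one can now route from $u^{(0)}$ down $\rho$ levels into $C_\rho$, cross to $C_\rho'$ via a rewired edge, reaching level $d(\rho+1)+1$ in $\rho + 1$ hops, so $\dist_{H^{e,\tilde e}}(u^{(0)}, u_k^{(d(\rho+1)+1)}) = \rho+1$. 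Since $d(\rho+1)+1 > d(\rho+1) \ge d \cdot (\rho+1)$, the stretch exceeds $d$ and $T$ is not a $d$-approximate BFS tree of $H^{e,\tilde e}$.

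The main obstacle is the careful bookkeeping required to confirm the distance values $d(\rho+1)+1$ and $\rho+1$ and that the chosen level of $C_\rho$ (namely $\rho$) together with the level of $C_\rho'$ (namely $d(\rho+1)+1$) yields a genuine violation of the $d$-stretch bound for all valid $d$; this amounts to checking that the "long" path through $T$ has length strictly more than $d$ times the "short" path through a rewired edge, i.e.\ $d(\rho+1)+1 > d(\rho+1)$, which is immediate, but one must also confirm that no shorter detour exists in $H^{e,\tilde e}$ that would make $T$ accidentally acceptable — this is handled by the $\rho$-unique shortest path property, which guarantees the only new short route is the one through the single rewired edge. Everything else is routine verification against the construction and the cited extremal graph theory.
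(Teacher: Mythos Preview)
Your proposal is correct and mirrors the paper's approach: the paper does not give a separate formal proof of this lemma but states it as a summary of the construction, and the justifications you supply (girth $>2\rho$ yields the $\rho$-unique shortest path property, the extremal result of \cite{bollobas2004extremal} gives the edge count, and the explicit distance computation $\dist_T = d(\rho+1)+1$ versus $\dist_{H^{e,\tilde e}} = \rho+1$ gives the stretch violation) are exactly the ingredients the paper lays out in the paragraphs preceding the lemma. One small remark: your closing worry about ``no shorter detour in $H^{e,\tilde e}$ that would make $T$ accidentally acceptable'' is misplaced --- a shorter path in $H^{e,\tilde e}$ could only \emph{increase} the stretch ratio $\dist_T/\dist_{H^{e,\tilde e}}$, never rescue $T$; all you need (and all the paper uses) is the existence of the length-$(\rho+1)$ path via the rewired edge, which you already exhibited.
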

\begin{figure}[th]
  \centering
\begin{subfigure}[t]{0.42\textwidth}  
  \centering
	\includegraphics[scale=0.75]{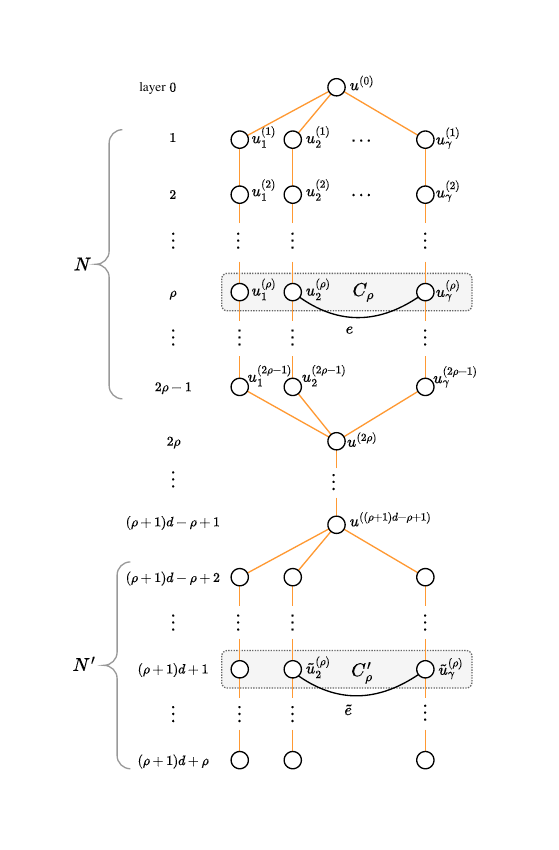}
  \caption{
		The base graph $H$. 
  }
  \label{fig:ktrho1}
\end{subfigure}
\hfill
\begin{subfigure}[t]{0.42\textwidth}
  \centering 
	\raisebox{5pt}{\includegraphics[scale=0.75]{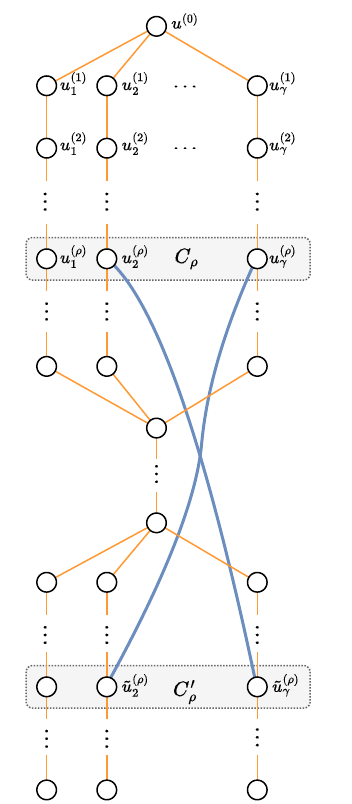}}
  \caption{
		The rewired graph $H^{e,\tilde e}$. 
  }
  \label{fig:ktrho2}
\end{subfigure}
\caption{The graph construction for proving a lower bound in $\kt_\rho$ for verifying a $d$-approximate BFS tree, where we have omitted the input labeling and node IDs. The shaded subgraphs are high girth graphs on $\gamma$ nodes that have $\Omega\lt( \gamma^{1+\frac{c}{\rho}} \rt)$ edges. We emphasize that even the base graph $H$ is connected, and thus the approach of \cite{awerbuch1990trade} does not apply in our setting.}
\label{fig:ktrho}
\end{figure}
\medskip
Next, we present a suitable ID assignment $\phi$ for vertices of $H$ (which also applies to vertices of $H^{e,\tilde{e}}$) defined as follows: 
\begin{enumerate}
    \item assign a distinct even integer in $[0, 2(2\rho-1)\gamma]$ to each node belonging to a level in $N$;
    \item assign a distinct odd integer in $[1,2(2\rho-1)\gamma+1]$ to each node $x$ belonging to a level in $N'$ such that $\phi(\tilde{x}) = \phi(x)+1$;
    \item assign arbitrary unique integers from $[2(2\rho-1)\gamma+2, n]$ to the remaining nodes in $V(H)$.
\end{enumerate} 
\medskip

\begin{definition}[$e$-connected]
Consider an edge $e$ in a graph that satisfies the $\rho$-unique shortest path property.  We say that nodes $x$ and $z$ are \emph{$e$-connected}, denoted by $x\econn z$, if $\dist(x, z) \leq \rho$ and the (unique) shortest path from $x$ to $z$ contains $e$. 
\end{definition}

\medskip
\begin{definition}[utilized/unutilized edge] \label{def:utilized}
We say that an edge $e \!=\! (u, v)$ is \emph{utilized} if one of the following conditions hold: 
\begin{compactenum}
    \item a message is sent across $e=(u,v)$; 
    \item there are two nodes $x$ and $z$ such that $x\econn z$, and $x$ receives or sends the ID of $z$ or $\tilde{z}$. 
\end{compactenum}
Otherwise, we say that the edge $e$ is \emph{unutilized}.
\end{definition} 

\medskip
\begin{definition}[charging rule] \label{def:charging}
If an edge $(x,y)$ is used to send a message that contains the ID of node $z$, we charge one unit to each of the following edges: 
\begin{compactenum}
    \item $(x,y)$;
    \item for each edge $e$, such that either $x\econn z$ or $x\econn \tilde{z}$.
    \item for each edge $e$, such that either $y\econn z$ or $y\econn \tilde{z}$.
\end{compactenum}
\end{definition}

\medskip
\begin{lemma}\label{lem:ktrho_lb}
Consider a graph $H$ which satisfies the $\rho$-unique shortest path property. 
Let $\mu$ be the number of utilized edges in an execution on $H$. Then the message complexity of the execution is $\Omega(\mu/\rho)$. 
\end{lemma}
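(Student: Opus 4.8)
The plan is to relate the quantity $\mu$ of utilized edges to the number of messages sent, using the charging rule of Definition~\ref{def:charging} as the bridge. The key observation is that the charging rule distributes charge from each message to a bounded number of edges, so the total charge received across all edges is $O(\rho \cdot \#\text{messages})$; on the other hand, I will argue that every utilized edge receives at least one unit of charge, so $\mu = O(\rho \cdot \#\text{messages})$, which rearranges to the claimed bound $\#\text{messages} = \Omega(\mu/\rho)$.

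First I would bound the charge \emph{dispensed} per message. Each message sent across an edge $(x,y)$ carries at most $B = O(1)$ node IDs (this is the comparison-based/$\congest$ restriction). Fix one such ID $z$ carried by the message. By Definition~\ref{def:charging}, this $(z, (x,y))$ pair charges: one unit to $(x,y)$ itself; one unit to each edge $e$ with $x \econn z$ or $x \econn \tilde z$; and one unit to each edge $e$ with $y \econn z$ or $y \econn \tilde z$. The crucial counting step is that the number of edges $e$ for which $x \econn z$ is at most one: by the $\rho$-unique shortest path property (Lemma~\ref{lem:ktrho_graph}), if $\dist(x,z)\le\rho$ there is a \emph{unique} shortest path between $x$ and $z$, and an edge $e$ satisfies $x\econn z$ exactly when $e$ lies on that path; hence the edges charged via the ``$x \econn z$'' clause number at most $\dist(x,z)-1 \le \rho-1$, and similarly for $x \econn \tilde z$, $y \econn z$, $y\econn\tilde z$. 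So each (message, ID) pair dispenses at most $1 + 4(\rho-1) = O(\rho)$ units of charge, and since a message carries $O(1)$ IDs, each message dispenses $O(\rho)$ units total. Therefore the total charge over all edges is $O(\rho)\cdot(\#\text{messages})$.

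Next I would show every utilized edge \emph{receives} at least one unit of charge. Let $e=(u,v)$ be utilized. By Definition~\ref{def:utilized} there are two cases. Case~1: a message was sent across $e$ itself; then that message (carrying whatever IDs it carries, or if it carries none we should check the charging rule still fires — here I would note that the first clause of Definition~\ref{def:charging} charges $(x,y)$ whenever it is used to send a message containing an ID; if a message carries no ID one must handle this separately, but in the comparison-based model messages relevant to forwarding information carry IDs, or one charges trivially) charges one unit to $e$ via clause~1, so $e$ gets charged. Case~2: there are nodes $x, z$ with $x\econn z$ and $x$ sends or receives the ID of $z$ or $\tilde z$. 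Then some message carrying the ID of $z$ (or $\tilde z$) traversed an edge incident to $x$, and since $x\econn z$ means $e$ lies on the unique shortest path from $x$ to $z$, clause~2 (or clause~3) of the charging rule charges one unit to $e$. In either case $e$ receives charge. Summing, $\mu \le \text{total charge} = O(\rho)\cdot(\#\text{messages})$, giving $\#\text{messages} = \Omega(\mu/\rho)$ as desired.

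The main obstacle is the bookkeeping in Case~2 of the utilized-edge argument and the precise matching between ``which message'' dispenses the charge that ``which utilized edge'' receives: one must be careful that the same message is not implicitly required to charge more edges than the rule allows, and that the $e$-connectedness used in Definition~\ref{def:utilized} (which refers to $x\econn z$ for \emph{some} edge, i.e.\ the notation $x\econn z$ is really parametrized by an edge) is reconciled with Definition~\ref{def:charging} (where $e$ ranges over all edges with $x\econn z$). I expect the cleanest route is to first establish the per-message charge bound $O(\rho)$ by the uniqueness-of-shortest-path count, then establish the ``every utilized edge is charged'' claim by a direct case analysis on Definition~\ref{def:utilized}, being slightly careful about messages that carry no IDs, and finally combine. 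The $\rho$-unique shortest path property is what makes the whole argument go through, since without it an edge could lie on exponentially many shortest paths and the charge-per-message bound would fail.
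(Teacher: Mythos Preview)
Your proposal is correct and follows essentially the same approach as the paper: bound the total charge per message by $O(\rho)$ via the $\rho$-unique shortest path property, observe that every utilized edge is charged at least once, and combine. One minor slip: the number of edges on the unique shortest $x$--$z$ path is $\dist(x,z)\le\rho$, not $\dist(x,z)-1$; and your worry about messages carrying no IDs is moot, since the paper reads clause~1 of the charging rule as charging $(x,y)$ once per message regardless of payload.
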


\begin{proof}
Let $\mu$ be the number of utilized edges, $M$ be the number of messages sent and $C$ be the total cost charged in the execution.
A message sent on an edge $(x,y)$ will incur one unit of charge to $(x,y)$ and for each $z$ where its ID is in the message, a unit of charge is applied to each edge $e$ where $x\econn z$,  $x\econn \tilde{z}$, $y\econn z$ and $y\econn \tilde{z}$. 
Note that there are at most $\rho$ edges of each of these types since the graph $H$ satisfies the $\rho$-unique shortest path property.
Note that each message can carry at most $B$ IDs of other nodes. 
Consequently, for each of the messages sent, at most $1+4B\rho$ edges are charged.
Hence, we have $C \leq (1+4B\rho)M.$
On the other hand, each utilized edge is charged at least once. Hence,
$\mu \leq C.$
As a result, we have $M \geq \frac{\mu}{1+4B\rho}$ and the lower bound $\Omega(\mu/\rho)$ follows.
\end{proof}

\begin{lemma}\label{lem:local_neighborhood}
Let $e = (u,v)$ be an edge in $C_\rho$. For $x \in \{u, v, \tilde{u}, \tilde{v}\}$, we have  $N_{\rho-1}(x)$ in $H$ is order-equivalent to $N_{\rho-1}(\tilde{x})$ in $H^{e, \tilde{e}}$. Consequently, $N_{\rho'}(x)$ in $H$ is order-equivalent to $N_{\rho'}(\tilde{x})$ in $H^{e, \tilde{e}}$ for any $\rho'\le \rho-1$.
\end{lemma}
\begin{proof}
Let $e'$ and $\tilde{e}'$ be the ``rewired" edges of $e$ and $\tilde{e}$ in $H^{e, \tilde{e}}$, respectively. Formally, $e' = (u, \tilde{v})$ and $\tilde{e}' = (\tilde{u}, v)$.  Note that $H$ is constructed to satisfy the following: 
Nodes $w$, $z$ are $e$ (resp. $\tilde{e}$)-connected in $H$ if and only if $w$, $\tilde{z}$ are $e'$(resp. $\tilde{e}'$)-connected in $H^{e, \tilde{e}}$.
Hence, $N_{\rho-1}(x)$ in $H$ is isomorphic (ignoring the ID assignments) to $N_{\rho-1}(\tilde{x})$ in $H^{e, \tilde{e}}$. 
Next, we need to show that the ID assignments of $N_{\rho-1}(x)$ is order-equivalent in both graphs. 
We abuse notation and say that node $x$ is in $N$ (resp. $x \in N'$) if $x$ is a node at a level belonging to $N$ (resp. $N'$). 
The definition of $\phi$ asserts that for two distinct nodes $x, y \in N$ (or $x, y \in N')$, we have $\phi(x) < \phi(y)$ if and only if $\phi(\tilde{x}) < \phi(\tilde{y})$ and $\phi(x) < \phi(\tilde{y})$. This establishes the order-equivalence property we require. 
\end{proof}

We are now ready to present the indistinguishability result. 
For now, we focus on deterministic algorithms; we later extend the result to randomized algorithms via a simple application of Yao's lemma.

\begin{lemma} \label{lem:ktx_indistinguishable}
Consider a deterministic comparison-based algorithm $\mathcal{A}$, the ID assignment $\phi$, graph $H$ and any rewired graph $H^{e,\tilde{e}}$, where $e \in C_\rho$. If $e$ and $\tilde{e}$ are both unutilized (see Def.~\ref{def:utilized}), then $H$ and $H^{e,\tilde{e}}$ are indistinguishable for every node $u$ when executing $\mathcal{A}$, i.e., $u$ has an order-equivalent $\rho$-neighborhood in both networks, and it sends and receives the same sequence of messages in $H$ as it does in $H^{e,\tilde{e}}$.  
\end{lemma}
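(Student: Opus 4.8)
The plan is to prove the statement by induction on the rounds of the execution, maintaining the invariant that (i) every node has an order-equivalent initial state in $H$ and in $H^{e,\tilde{e}}$, and (ii) at the start of every round, each node has sent and received exactly the same sequence of messages (including the same port numbers) in both networks, \emph{provided} that $e$ and $\tilde{e}$ remain unutilized. Since we are dealing with a comparison-based algorithm, a node's behavior depends on its ordinary variables (which evolve identically if it received identical messages) and on the \emph{order type} of its ID variables rather than the actual ID values; hence the key is to exhibit, for each node $u$, an order-preserving correspondence between the ID information $u$ sees in $H$ and in $H^{e,\tilde{e}}$.

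First I would establish the base case: by Lemma~\ref{lem:ktrho_graph} and the construction of $C_\rho$, the $\rho$-neighborhood of every node $x$ in $C_\rho$ is isomorphic (as a graph with port numbers) to that of $\tilde{x}$ in $C_\rho'$, and all other nodes have \emph{identical} $\rho$-neighborhoods in $H$ and $H^{e,\tilde{e}}$ (the rewiring only swaps $e$ with $\tilde{e}$, which affects only the local topology at the four endpoints and their close vicinity). Combined with the ID assignment $\phi$ — where $\phi(\tilde{x}) = \phi(x)+1$ for the relevant levels and all values in $C_\rho \cup C_\rho'$ come from the interval of even/odd integers below $2(2\rho-1)\gamma+2$ while the rest come from above — one checks that swapping $e$ for $\tilde{e}$ maps the initial ID view of each node to an order-equivalent one. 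The crucial point is that $\phi(x)$ and $\phi(\tilde x)$ differ by exactly $1$ and no other ID falls strictly between them, so substituting one for the other preserves \emph{all} order relations; this is what makes the initial states order-equivalent rather than merely isomorphic.

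For the inductive step, I would argue that if the invariant holds through round $r$ and $e,\tilde{e}$ are still unutilized after round $r$, then round $r+1$ proceeds identically in both networks. Each node $u$ computes its round-$(r+1)$ messages from its ordinary variables (identical by hypothesis) and comparisons among its ID variables; since its ID view is order-equivalent across the two networks, a comparison-based algorithm makes the same decisions, and in particular sends the same messages over the same ports — \emph{except} possibly over $e$ or $\tilde{e}$ themselves, or unless some message carries an ID $\phi(z)$ where the order relationship of $z$ relative to another carried/stored ID differs between the networks. But Definition~\ref{def:utilized} is precisely engineered to rule this out: if a message sent across an edge would carry the ID of $z$ (or $\tilde z$) where $x \econn z$, the edge $e$ becomes utilized, contradicting the hypothesis. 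Thus as long as $e,\tilde e$ are unutilized, no message is sent across them and no node ever handles an ID whose order type is ambiguous between the two networks, so the sequences of messages sent and received are identical; this closes the induction.

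The main obstacle I anticipate is the bookkeeping in the inductive step for messages carrying IDs: one must carefully verify that the only way the order type of the multiset of IDs appearing in a node's local view can differ between $H$ and $H^{e,\tilde{e}}$ is through an ID $z$ with $x \econn z$ or $x \econn \tilde z$ (equivalently $z \in \{$descendants of $e$ along $\rho$-length shortest paths$\}$), and that every such event is captured by clause~2 of Definition~\ref{def:utilized}. This requires using the $\rho$-unique-shortest-path property to pin down exactly which nodes' IDs ``witness'' the rewiring — a node $z$ whose ID looks different only if reaching it within $\rho$ hops forces a path through $e$ — and then observing that handling such an ID in a message immediately utilizes $e$. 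The rest (ordinary variables, port consistency, the extension to randomized algorithms via Yao's principle over the fixed ID assignment $\phi$) is routine once this core claim is in place.
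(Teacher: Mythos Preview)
Your proposal is correct and follows essentially the same approach as the paper: induction on rounds, with the base case established via the order-equivalence of the $\rho$-neighborhoods under $\phi$, and the inductive step using the definition of ``utilized'' to guarantee that no node ever sends or receives an ID $\phi(z)$ (or $\phi(\tilde z)$) with $x \econn z$, so that all messages are literally identical in the two executions. The only minor imprecision is in your phrasing of the inductive step: the obstacle is not that the \emph{order relationship} of such a $z$ differs between the networks (by design of $\phi$ it does not) but that the \emph{actual ID value} carried would differ ($\phi(z)$ versus $\phi(\tilde z)$), and clause~2 of Definition~\ref{def:utilized} is precisely what forbids such an ID from appearing in any message while $e$ and $\tilde e$ remain unutilized.
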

\begin{proof}
Let $e = (u,v)$. 
The statement is immediate for any node not in the $(\rho-1)$-neighborhood of either $u$, $v$, $\tilde{u}$, or $\tilde{v}$. 
Thus we focus on these nodes. We denote the set of these nodes as $N^* = N_{\rho-1}(u) \cup N_{\rho-1}(v) \cup N_{\rho-1}(\tilde{u}) \cup N_{\rho-1}(\tilde{v})$. 

First, we show the following claim.  
\begin{claim}
A node $x \in N^*$ has an order-equivalent $\rho$-neighborhood in $H$ and in $H^{e, \tilde{e}}$, i.e., $N_\rho(x)$ in $H$ is order-equivalent to $N_\rho(x)$ in $H^{e, \tilde{e}}$.
\end{claim}
\begin{proof}
For a node $x$ and an edge $f$, we define $S_{x, f}$ to be the set of nodes $x'$ in $N_{\rho-1}(x)$ such that the shortest path from $x'$ to $x$ does not include edge $f$ (i.e., $x'$ and $x$ are not $f$-connected).
Then, $N^*$ is partitioned into disjoint subsets of 
$S_{u,e}$, $S_{v, e}$, $S_{\tilde{u}, \tilde{e}}$ and $S_{\tilde{v}, \tilde{e}}$. 
We first show that the claim is true for $x=u$. 
Note that $N_\rho(x)$ in $H$ is order-equivalent to $N_\rho(x)$ in $H^{e, \tilde{e}}$ if and only if $N_{\rho-1}(w)$ in $H$ is order-equivalent to $N_{\rho-1}(w)$ in $H^{e, \tilde{e}}$ for all neighbors $w$ of $x$. 
Observe that the set of neighbors of \( u \) in \( H \) are the same as in \( H^{e, \tilde{e}} \), except that \( u \) is adjacent to \( v \) in \( H \) and to \( \tilde{v} \) in \( H^{e, \tilde{e}} \). 
Hence, the claim that $N_{\rho}(u)$ in $H$ is order-equivalent to $N_{\rho}(u)$ in $H^{e, \tilde{e}}$ is true if and only if $N_{\rho-1}(v)$ in $H$ is order-equivalent to $N_{\rho-1}(\tilde{v})$ in $H^{e, \tilde{e}}$, which we have established in Lemma~\ref{lem:local_neighborhood}.
The above argument for the special case $x=u$ can be generalized to all $x\in S_{u, e}$ as follows: 
The claim that $N_{\rho}(x)$ in $H$ is order-equivalent to $N_{\rho}(x)$ in $H^{e, \tilde{e}}$ is true if and only if $N_{\rho-r-1}(v)$ in $H$ is order-equivalent to $N_{\rho-r-1}(\tilde{v})$ in $H^{e, \tilde{e}}$ where $r = \dist_H(x, u)$, which we have established in Lemma~\ref{lem:local_neighborhood}. 
A similar argument applies to the nodes in $S_{v, e}$, $S_{\tilde{u}, \tilde{e}}$ and $S_{\tilde{v}, \tilde{e}}$. 
\end{proof}

Next, we show that each node sends the same messages in round $1$. 
Since $x$ has order-equivalent neighborhoods in both executions and recalling that the algorithm is comparison-based, any ordinary variable that $x$ computes at the start of round $1$ must have the same value in both executions.
Next, we show that the ID type variable used for $x$ in sending messages is identical in both executions as well. 
Recall that the IDs that $x$ knows at the start of round $1$ are exactly the IDs in its $\rho$-neighborhood, which may be different in $H$ compared to $H^{e,\tilde{e}}$, as we have observed above. 
In particular, the IDs in $N_\rho(x)$ of $H$ that are not in $N_\rho(x)$ of  $H^{e,\tilde{e}}$ are the IDs of all vertices $z$, for which it holds that $x$ and $z$ are $e$-connected. 
The reason for this is that, in $H^{e,\tilde{e}}$, nodes $x$ and $\tilde{z}$ are $e$-connected, whereas $x$ and $z$ are not.  
However, since $e$ and $\tilde{e}$ are unutilized, $x$ can neither include the ID of $z$ in the execution on $H$, nor the ID of its neighbor $\tilde{z}$ in the execution on $H^{e,\tilde{e}}$, for any node $z$ for which $x\econn z$.
Hence, all IDs in the message sent by node $x$ are identical in both executions. 
Therefore, $x$ sends the same messages in both networks, $H$ and $H^{e,\tilde{e}}$.

Finally, we show that if every node sends the same messages during round $r-1$, then every node sends the same messages in round $r$.  
We use induction over the rounds, where the basis already follows from above.
Now consider some round $r>1$ and assume that every node sent the same messages in round $r-1$ in both $H$ and $H^{e, \tilde{e}}$. 
In this case, every node also receives the same messages during round $r-1$ in both executions.
Consider the set of messages $\Pi$ received by $x$.
Since $e$ and $\tilde{e}$ are unutilized, no message in $\Pi$ contains the ID of $z$ or $\tilde{z}$ for all nodes $z$ that $x$ is $e$-connected to.
Hence, all IDs received by $x$ correspond to the same nodes in $H$ and $H^{e,\tilde{e}}$. 
Again, using the fact that the algorithm is comparison-based and the fact that all IDs that may be contained in the messages in $\Pi$ belong to nodes that have an identical $\rho$-neighborhood in $H$ and $H^{e,\tilde{e}}$, it follows that $x$ sends the same messages in round $r$.
\end{proof}

We are now ready to complete the proof of Theorem \ref{thm:ktx_bfs_lb}
for deterministic comparison-based algorithm. 

\begin{lemma} \label{lem:kthro_deterministic}
Consider any small constant $\eta>0$.
If a deterministic comparison-based algorithm solves $d$-approximate BFS verification on the base graph $H$ and for at least a $\eta$-fraction of the rewired graphs, then it has a message complexity of $\Omega\left(\frac{1}{\rho}\left(\frac{n}{\rho}\right)^{1+\frac{c}{\rho}}\right)$ for $\rho>1$ (resp. $\Omega(n^2)$ for $\rho=1$).
\end{lemma}
\begin{proof} 
Assume towards a contradiction that there exists a deterministic comparison-based algorithm $\A$ that solves $d$-approximate BFS verification on input $\Inp(H,T,n)$ with message complexity $o\left(\frac{1}{\rho}\gamma^{1+\frac{c}{\rho}}\right)$ for $\rho>1$ (resp. $o(n^2)$ for $\rho=1$). 
Then by Lemma \ref{lem:ktrho_lb}, the number of utilized edges is  $o(\gamma^{1+\frac{c}{\rho}})$ for $\rho>1$ (resp. $o(n^2)$ for $\rho=1$).
Lemma \ref{lem:ktrho_graph} tells us that there are $\Omega (\gamma^{1+\frac{c}{\rho}})$ edges in $C_\rho \cup C_\rho'$ for $\rho>1$ (resp. $\Omega(n^2)$ for $\rho=1$). 
Hence, there is a subset $\mathcal{E}$ containing at least a $(1-\eta)$-fraction of the edges in $C_\rho$ such that, for every $e \in \mathcal{E}$, both $e$ and $\tilde{e}$ are unutilized. 
Consider the rewired graph $H^{e,\tilde{e}}$. 
Lemma~\ref{lem:ktx_indistinguishable} ensures that every node outputs the same result in both executions. However,  according to Lemma \ref{lem:ktrho_graph}, $T$ is a $d$-approximate BFS of $H$ but not of $H^{e,\tilde{e}}$, which shows that the algorithm fails on every rewired graph $H^{e,\tilde{e}}$, for $e \in \mathcal{E}$. 

The claimed bound on the message complexity for $\rho=1$ follows immediately from the above discussion.
For $\rho>1$, we need to show that  $\Omega\left(\frac{1}{\rho}\gamma^{1+\frac{c}{\rho}}\right)=\Omega\left(\frac{1}{\rho}\left(\frac{n}{\rho}\right)^{1+\frac{c}{\rho}}\right)$.
Let $\gamma = kn$ where 
$k = \frac{1}{4\rho-2} - \frac{1}{n}\cdot\frac{d(\rho+1)-3\rho+3}{4\rho-2}$. 
We have 
\begin{align}
  \gamma
    &=  \frac{n}{4\rho-2} - \frac{d(\rho+1)}{4\rho-2} + \frac{3(\rho-1)}{2(2\rho-1)}\notag\\
    \ann{since $\rho \ge 1$, $\frac{\rho-1}{2\rho-1}\ge 0$}
    &\ge \frac{n}{4\rho-2} - \frac{d(\rho+1)}{4\rho-2} \notag\\
    \ann{since $d \leq \frac{(1-\beta)n}{\rho+1}$}
    &\ge \frac{\beta n}{4\rho-2} = \Omega\left(\frac{n}{\rho}\right). \notag
  \end{align}
\end{proof}

\noindent\textbf{Randomized Algorithms.}
So far, we have restricted our attention to deterministic algorithms. 
To complete the proof of Theorem \ref{thm:ktx_bfs_lb}, we need to extend this result to randomized Monte Carlo algorithms that fail with some small probability $\epsilon < 1/4$.
We follow the standard approach of showing a lower bound for deterministic algorithms that succeed with a sufficiently large probability, when sampling the input graph from a hard distribution, defined next: 
We first flip a fair coin that determines whether we choose the base graph $H$ or a rewired graph. In the latter case, we sample a rewired graph uniformly at random from the set of all possible rewired graphs $\mathcal{R}$.

\medskip

Now consider a deterministic algorithm $\mathcal{A}$ that succeeds on this distribution with a distributional error of at most $2\epsilon$.
Observe that the given algorithm $\mathcal{A}$ cannot fail on graph $H$, since this would result in an distributional error of at least $\tfrac{1}{2}> 2\epsilon$.
Thus, we will obtain a contradiction if we can show that $\mathcal{A}$ fails on more than a $4\epsilon$-fraction of the rewired graphs, as this would yield a distributional error of more than $\tfrac{4\epsilon}{2}=2\epsilon$.%
If $\mathcal{A}$ does not satisfy the sought message complexity lower bound, then Lemma~\ref{lem:kthro_deterministic} tells us that $\mathcal{A}$ fails on at least a $(1-\eta)$-fraction of the rewired graphs. 
By choosing $\eta$ sufficiently small, it follows that $1 - \eta > 4\epsilon$ if $\epsilon<\frac{1}{4}-\frac{\eta}{4}$, where the latter term can be made as small as needed.
Thus, similarly to the proof of Lemma~\ref{lem:general_kt0_lb} in Section~\ref{app:general_kt0_lb}, a simple application of Lemma~\ref{lem:yao} yields the sought message complexity lower bound for randomized Monte Carlo algorithms, and completes the proof of the theorem.

\section{An Algorithm for Spanning Tree Verification in the $\ktzero$ $\congest$ Model} 
\label{sec:st_algo}\label{sec:st_upper}   

In this section, we give a message-efficient algorithm that verifies whether the input $T$ is an ST of the network $G$ under the $\OneDetects$ assumption, in the setting where all nodes have knowledge of some $\alpha$-approximation $\tilde{n}$ of the network size $n$, for some $\alpha < \sqrt{2}$; formally speaking, $\tilde{n} \in [n/\alpha,\alpha n]$.
Our result stands in contrast to the strong lower bound of $\Omega(n^2)$ shown by \cite{kor2013tight} that holds for deterministic ST verification assuming $\AllDetect$ and without \emph{any} knowledge of the network size.
Note that the bound of $\alpha<\sqrt{2}$ is tight, since  we show a lower bound of $\Omega(n^2)$ under the same setting in Theorem~\ref{thm:bfs_kt0_lb} in Section~\ref{sec:general_lb}. 

We obtain our algorithm by adapting the classic GHS algorithm for constructing an ST, see \cite{gallager1983distributed}.
However, in contrast to the GHS algorithm, we do not employ the communication-costly operation of exchanging the fragment IDs between neighboring nodes, which requires $\Omega\lt( m \rt)$ messages per iteration.
Considering that our goal is to verify that a given tree $T$ is indeed a spanning tree, we can select an arbitrary edge $e$ from $T$ (when growing a fragment) that is incident to some vertex of the fragment: we can stop the growing process immediately if $e$ turns out to close a cycle. 

\begin{theorem} \label{thm:mst_upper}
Suppose that all nodes know an $\alpha$-approximation of the network size, for some $\alpha<\sqrt{2}$. 
There is a deterministic $\ktzero$ $\congest$ algorithm that solves spanning tree verification with a message complexity of $O(n \log n)$ and a time complexity of $O(n\log n)$ rounds while ensuring at least one node detects illegal inputs (i.e., $\OneDetects$).
Moreover, if nodes have perfect knowledge of the network size, the algorithm guarantees $\AllDetect$.
\end{theorem}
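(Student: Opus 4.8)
The plan is to verify $T$ by running a stripped-down version of the GHS spanning-forest construction \cite{gallager1983distributed} that uses \emph{only} the edges of $T$, never communicates over a non-$T$ edge of $G$, and ``defaults to reject'': every node outputs $0$ unless it explicitly receives an \textsf{accept} message. I would organize the algorithm into $O(\log n)$ Bor\r{u}vka-style phases, exactly as in GHS: fragments are rooted subtrees of the forest, each with a leader, and they merge over phases. Each node already knows which of its incident edges are in $T$ and which of those have been used in a merge (its \emph{fragment-tree edges}); the remaining incident $T$-edges are \emph{available}. Crucially, and unlike GHS, a fragment does \emph{not} exchange its identifier over all incident edges; instead, in a phase, each fragment that is neither \emph{tainted} nor \emph{stable} gathers, by a convergecast over its tree, one node $u$ owning an available edge $e$, and $u$ probes $e$ by a single message over the corresponding port, receiving back the fragment identifier of the other endpoint $v$ (legitimate under $\ktzero$, since sending on a port does not require knowing the neighbor's id). If $v$ lies in the same fragment, then $e$ closes a cycle in $T$, so the fragment becomes \emph{tainted} and stops probing; if $v$ lies in a different fragment, the two fragments try to merge along $e$, resolving concurrent merge attempts exactly as in GHS. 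A fragment with no available edge is declared \emph{stable}; I would then observe that such a fragment is exactly a full connected component of $T$ that is itself a tree. Each stable fragment convergecasts its size $s=|F|$ to its leader, and \emph{only} if the fragment is stable, untainted, and $s\ge\tilde n/\alpha$ does the leader broadcast \textsf{accept} down the fragment tree; the algorithm runs for a fixed $\Theta(n\log n)$ rounds, after which a node outputs $1$ iff it received \textsf{accept} and $0$ otherwise. All messages carry $O(\log n)$ bits (ids, a count, or a flag), so this is a $\congest$ algorithm, and it is deterministic (ids are used for leader election and merge tie-breaking inside GHS).

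For the complexity bounds, I would invoke the standard GHS analysis: restricting to $T$-edges, picking an arbitrary rather than a minimum-weight outgoing edge, and letting fragments stop early once tainted can only accelerate termination, so there are still $O(\log n)$ phases, each costing $O(n)$ messages and $O(n)$ rounds for the fragment-internal convergecasts, broadcasts, and merge coordination; this already gives $O(n\log n)$ messages and $O(n\log n)$ rounds. The only additional cost is probing, and I would bound the \emph{total} number of probes by $O(n)$: each probe either consumes one of the at most $n-1$ $T$-edges that ever become fragment-tree edges, or is the single probe that taints its fragment, after which that fragment never probes again. The point is that we never touch the $m-(n-1)$ non-$T$-edges of $G$, and we never re-probe, which is precisely what separates this bound from the $O(m\log n)$ of vanilla GHS.

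Correctness of the verdict I would argue as follows. If $T$ is a spanning tree, the process ends with a single stable untainted fragment of size $s=n$, and since $\tilde n\le\alpha n$ we have $n\ge\tilde n/\alpha$, so its leader broadcasts \textsf{accept} to all $n$ nodes and everyone outputs $1$. Conversely, suppose $T$ is illegal. A connected component of $T$ containing a cycle never yields a stable fragment inside it --- a stable fragment is a full $T$-component that is a tree --- so no \textsf{accept} ever reaches any node of that component. A component that is a tree of size $s$ does eventually become a stable fragment, but it broadcasts \textsf{accept} only when $s\ge\tilde n/\alpha$; since $\tilde n\ge n/\alpha$, this forces $s\ge n/\alpha^2>n/2$ (here is where $\alpha<\sqrt2$ enters), so \emph{at most one} component can qualify. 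Since an illegal $T$ either has a cyclic component or has at least two components, in both cases at least one component receives no \textsf{accept}, so at least one node --- in fact, all of that component --- outputs $0$; this is $\OneDetects$. For the strengthening, when $\alpha=1$ the qualifying condition $s\ge\tilde n/\alpha$ reads $s\ge n$, i.e.\ $s=n$, which holds iff the component is all of $V$ and is a tree, i.e.\ iff $T$ is a spanning tree; hence for an illegal $T$ \emph{no} node receives \textsf{accept}, so \emph{all} nodes output $0$, which is $\AllDetect$.

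I expect the main obstacle to be the message bound in the presence of cycles: a naive GHS-style search could probe up to $\Theta(m)$ internal $T$-edges in a dense cyclic input $T$, which would ruin the $O(n\log n)$ target. The remedy --- tainting a fragment and halting its probing at the first internal probe, combined with the ``default to reject'' design so that tainted fragments need not propagate anything at all --- is the key device, but then one must carefully verify the two structural facts used above: that a fragment with no available edge is precisely a tree component of $T$ (so stable fragments are trustworthy), and that a cyclic component can never masquerade as a stable fragment. The remaining care is arithmetic: checking that the threshold $\tilde n/\alpha$ admits the single component of a legal input yet excludes every component of a genuinely disconnected $T$ exactly in the regime $\alpha<\sqrt2$.
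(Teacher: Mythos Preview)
Your overall approach---grow fragments over the $T$-edges only, detect cycles, and compare the size of stable fragments against the threshold $\tilde n/\alpha$---is exactly the paper's, and your correctness and threshold arguments for $\OneDetects$ and $\AllDetect$ match it.

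There is, however, a real gap in the cycle handling. Tainting only catches the case where a fragment's probe lands back in the \emph{same} fragment. But when $T$ contains a cycle, several \emph{distinct} fragments can probe edges that collectively form a cycle in the fragment graph: if $T$ is a triangle, the three singleton fragments may each probe clockwise, every probe reaches a different fragment, and nothing is tainted. ``Resolving concurrent merge attempts exactly as in GHS'' does not cover this, because GHS's guarantee that every fragment-graph component contains a \emph{core edge} (two fragments selecting the same edge) relies on the minimum-weight rule; with arbitrary edge selection over a cyclic $T$, a component may have no core edge, and the GHS merge mechanism simply does nothing. The paper addresses this explicitly with its Acyclicity Check and Lemma~\ref{lem:frags}: a component of the fragment graph is a tree if and only if it has a core edge, so fragments that see no core edge after a timeout conclude that $T$ is cyclic and reject.

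Without an analogue of this step, two of your claims are unsupported. First, your assertion that ``there are still $O(\log n)$ phases'' on a legal input needs the fact that every fragment-graph component has a core edge, so that every fragment merges in every phase; this is precisely where the acyclicity of a legal $T$ enters, and ``can only accelerate termination'' is not a proof of it. Second, your $O(n)$ probe bound is false as stated: stalled probes on a cyclic $T$ neither become fragment-tree edges nor taint, so the dichotomy you invoke fails (the overall $O(n\log n)$ message bound survives via the cruder count of at most $n$ probes per phase, but your specific argument does not). Your default-to-reject design does rescue correctness on illegal inputs even in the stalled case, but you should say so explicitly rather than leave it implicit.
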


While our main focus is on the message complexity, we point out that the round complexity of the algorithm in Theorem~\ref{thm:mst_upper} can be as large as $O(n\log n)$. 
This comes as no surprise, considering that the state-of-the-art solution~\cite{DBLP:conf/icdcn/MashreghiK17,king2015construction} for computing a spanning tree with $O(n \poly\log n)$ messages (even under the stronger $\ktone$ assumption) requires at least $\Omega(n)$ rounds. 

In Section~\ref{subsec: ST_algo}, we describe and analyze the deterministic algorithm claimed in Theorem~\ref{thm:mst_upper}. 
\onlyShort{We have omitted some proofs, which can be found in the full paper.}

\subsection{Description of the Algorithm}\label{subsec: ST_algo}

\paragraph*{Growing Fragments:} 
Each node forms the root of a directed tree, called \emph{fragment}, that initially consists only of itself as the \emph{fragment leader}, and every node in the fragment knows its (current) fragment ID, which is simply the ID of the fragment leader.

The algorithm consists of iterations each comprising $c_1\,\alpha\,\tilde n$ rounds \footnote{Recall that each node only knows the  $\alpha$-approximation network size $\tilde{n}$. By definition of $\alpha$-approximation, we have $\alpha \tilde{n} \ge n$.}, for a sufficiently large constant $c_1\ge 1$, and the goal of an iteration is to find an \emph{unexplored edge}, i.e., some edge $e$ over which no message has been sent so far. 
We point out that, in contrast to other ST construction algorithms that follow the Boruvka-style framework of growing fragments, $e$ is \emph{not} guaranteed to lead to another fragment. 

In more detail, we proceed as follows:
At the start of an iteration, each fragment leader broadcasts along the edges of its fragment $F$. 
Upon receiving this message from a parent in $F$, a node $u$ checks whether it has any incident edges in $T$ over which it has not yet sent a message.
If yes, $u$ immediately responds by sending its ID to its parent; otherwise, it forwards the request to all its children in $F$.
If $u$ does not have any children, it immediately sends a nil-response to its parent. 
On the other hand, if $u$ does have children and it eventually receives a non-nil message from some child, it immediately forwards this response to its parent and ignores all other response messages that it may receive from its other children in this iteration.
In the case where $u$ instead received nil responses from every one of its children, $u$ finally sends a nil message to its own parent.
This process ensures that the fragment leader $u_\ell$ eventually learns the ID of some node $v$ in its fragment that has an unexplored incident tree edge $e_v\in T$, if such a $v$ exists. 

Next, $u_\ell$ sends an \texttt{explore!}-message along the tree edges that is forwarded to $v$, causing $v$ to send a message including the fragment ID (i.e., $u_\ell$'s ID) on edge $e_v$, over which it has not yet sent a message. 
Assume that this edge is connected to some node $w$.

We distinguish two cases:
First, assuming that $w$ is in the same fragment as $v$, node $w$ responds by sending an $\texttt{illegal}$ message to $v$ who upcasts this message to the fragment leader $u_\ell$, who, in turn, initiates a downcast of this message to all nodes in the fragment, instructing every node in $F$ to output $0$ (``reject''). 
In this case, the fragment $F$ stops growing and all its nodes terminate.

In the second case, $w$ is in a distinct fragment. 
If $w$'s fragment has not yet terminated, then $w$ responds to $v$,
and $v$ relays the ID of the successfully-found outgoing fragment edge to its fragment leader. 
Moreover, $w$ informs its own fragment leader about the fragment ID of $v$.
If, on the other hand, $w$ has already terminated, it remains mute, and $v$ forwards this information to its own fragment leader, causing all nodes in $F$ to terminate.
Before we start the next iteration, all fragment leaders simultaneously start a cycle detection procedure described below, initiated exactly $c_1 \alpha \tilde n$ rounds after the start of current iteration.

\paragraph*{Acyclicity Check:}
Since all (non-terminated) fragments attempt to find outgoing edges in parallel in this iteration, we may arrive at the situation where there is a sequence of fragments $F_1,F_2,\dots,F_k$ such that the unexplored edge found by $F_i$ leads to $F_{i+1}$, for $i \in [1,k-1]$, and the unexplored edge discovered by $F_k$ may lead ``back'' to some $F_j$ ($j < k$).
Conceptually, we consider the \emph{fragment graph} $\mathcal{F}$ where vertex $f_i$ corresponds to the fragment leader of $F_i$, and there is a directed edge from $f_i$ to $f_j$ if the edge explored by $F_i$ points to some node in $F_j$. %
If two fragments $f$ and $f'$ happen to both explore the same edge $e$ in this iteration, then we say that $e$ is a \emph{core edge}.
We say that a cycle in ${\mathcal{F}}$ is \emph{bad} if it involves at least $3$ fragments.

\medskip
\begin{lemma} \label{lem:frags}
Every component $C$ of ${\mathcal{F}}$ has at most one bad cycle. 
Moreover, a component $C$ that contains at least two fragments does not have a bad cycle if and only if there exist exactly two fragments $f$ and $f'$ in $C$ that are connected by a core edge.
\end{lemma}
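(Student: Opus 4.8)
\textbf{Proof plan for Lemma~\ref{lem:frags}.}
The plan is to analyze the structure of the fragment graph $\mathcal{F}$ by bounding the out-degree of every vertex and then arguing about what kinds of components such a bounded-out-degree directed graph can have. First I would observe that, by the description of the algorithm, in a single iteration each fragment leader $f_i$ learns of at most one unexplored tree edge $e_{v_i}$ incident to some node $v_i$ in its fragment, and hence contributes at most one outgoing edge to $\mathcal{F}$. Thus every vertex of $\mathcal{F}$ has out-degree at most $1$. A directed graph in which every vertex has out-degree at most $1$ is a \emph{functional graph} (on the vertices that do have out-degree exactly $1$), and the standard structural fact about such graphs is that each weakly connected component contains at most one directed cycle: following the unique out-edges from any vertex, the walk is eventually forced to repeat a vertex, and the repeated portion is the unique cycle reachable from that vertex; since the whole component flows into this ``rho-shaped'' sink structure, two distinct cycles in one component would force some vertex to have out-degree $\ge 2$, a contradiction.

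Next I would handle the ``core edge'' characterization. Recall an edge $e$ is a core edge exactly when two fragments $f$ and $f'$ both explore $e$ in this iteration; in that case the two directed edges contributed by $f$ and $f'$ form a $2$-cycle $f \to f' \to f$ (each endpoint of $e$ lies in the other fragment, so $f$'s explored edge points into $f'$ and vice versa). So the presence of a core edge inside a component immediately produces a cycle, meaning such a component is \emph{not} a tree; this gives one direction of the equivalence (contrapositive: if $C$ is a tree then no core edge lies in $C$). For the converse I would count edges: a weakly connected component $C$ on $t$ vertices has, by the out-degree bound, at most $t$ edges, and it is a tree iff it has exactly $t-1$ edges, i.e.\ iff exactly one vertex of $C$ has out-degree $0$. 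A vertex $f_i$ has out-degree $0$ precisely when fragment $F_i$ found no unexplored tree edge at all, or — and this is the case relevant here — when the unexplored edge it found is shared, i.e.\ is a core edge, because then (as noted) the two fragments form a $2$-cycle rather than each pointing ``forward''. Wait — I need to be careful: if $f$ and $f'$ share core edge $e$, both still have out-degree $1$ (each points to the other), so the ``missing'' edge accounting must come from elsewhere. The cleaner route: a $2$-cycle on $\{f,f'\}$ uses $2$ edges on $2$ vertices, so if $C$ is exactly these two vertices it has $2 = t$ edges and is not a tree; conversely if $C$ is a tree it has $t-1$ edges, so some vertex has out-degree $0$, and one shows that the only way the growing process terminates a component without a cycle is via a distinguished ``sink'' created by exactly the two fragments meeting along a core edge. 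I would make this precise by arguing that the algorithm continues growing until either a cycle appears in $\mathcal{F}$ or two fragments meet on a common edge, so an acyclic component must contain such a core-edge meeting, and a counting/parity argument (each of $f,f'$ on the core edge effectively contributes one ``forward'' edge to a shared target, reducing the edge count by one) shows this meeting is unique within $C$.

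The main obstacle I anticipate is the bookkeeping in the ``exactly two fragments connected by a core edge'' clause: I must rule out (i) a component that is a tree yet contains no core edge, and (ii) a component containing two or more distinct core-edge meetings. For (i) I would argue that if a component has all vertices of out-degree $1$ then it contains a cycle (functional-graph fact again), so a tree component must have a vertex of out-degree $0$, and then trace why the algorithm's exploration rule forces that deficiency to be realized by a core edge rather than by a fragment simply running out of unexplored tree edges — this uses the fact that $T$ is a spanning subgraph, so a fragment that is not the whole graph always has some incident tree edge leading out, unless the only such edge is already being explored from the other side (a core edge). For (ii), two core-edge meetings in one component would create two disjoint $2$-cycles, again forcing out-degree $\ge 2$ somewhere, contradicting the out-degree bound; I would phrase this as: distinct core edges yield vertex-disjoint $2$-cycles, and a functional graph component has a unique cycle, so at most one core edge per component. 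Assembling these observations — out-degree $\le 1$, the unique-cycle property of functional graphs, and the characterization of out-degree-$0$ vertices via core edges — yields both statements of the lemma.
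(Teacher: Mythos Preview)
Your first claim (at most one cycle per component, via the out-degree bound) is fine and matches the paper. The second part, however, goes off the rails because you have the relationship between core edges and cycles backwards. In the paper's setting, \emph{every} fragment in the component has out-degree exactly $1$ (each fragment succeeds in exploring one tree edge; a fragment that finds no unexplored edge goes to the size check and is not part of this analysis). So a $t$-vertex component always carries exactly $t$ explorations. The point of a core edge is that two of those explorations land on the \emph{same} underlying tree edge, so the set of distinct explored tree edges --- which is what the edges of $\mathcal{F}$ really are --- has size $t$ minus the number of core edges. Hence $C$ is a tree (has $t-1$ edges) iff there is exactly one core edge. That is the paper's argument in full: a two-line edge count.

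Your proposal instead treats a core edge as producing a $2$-cycle $f\to f'\to f$ and concludes that a core edge makes $C$ \emph{not} a tree --- the opposite of the lemma's statement. From there you try to recover by looking for an out-degree-$0$ vertex, but no such vertex exists here, and the patch you suggest (``$T$ is a spanning subgraph, so a fragment that is not the whole graph always has some incident tree edge leading out'') is illegitimate: the whole point of the algorithm is to \emph{verify} whether $T$ is a spanning tree, so you cannot assume it. The argument about two core edges forcing out-degree $\ge 2$ is also off: two core edges would simply mean four of the $t$ explorations collapse to two distinct tree edges, giving $t-2$ edges on $t$ vertices and a disconnected component, not an out-degree violation. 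The fix is to drop the functional-graph/out-degree-$0$ line entirely and do the direct edge count.
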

\begin{proof}
Since each fragment explores one outgoing edge, it is clear that $C$ has at most one bad cycle, which proves the first statement.

Now suppose that $C$ has no bad cycles. 
Recalling that each fragment in $C$ has exactly one outgoing edge in $\mathcal{F}$, it follows that there must exist a core edge between two fragments, as otherwise there would be a bad cycle. 
Since $C$ has the same number of edges as it has fragments and is connected, there cannot be any other core edges.

For the converse statement, suppose that $C$ has exactly two fragments $f$ and $f'$ connected by a core edge $e$, and assume towards a contradiction that there exists a bad cycle $Z$ in $C$ consisting of $k$ edges, for some positive integer $k\ge 3$. 
Note that $e$ cannot be part of $Z$, as otherwise, $Z$ would consist of only $k-1$ vertices connected by $k$ edges, contradicting the fact that the out-degree of each fragment is $1$.
On the other hand, if $e$ is not in $Z$, then there must exist a path $f_1,\dots,f_\ell$ where $f_1$ is part of $Z$, none of the $f_i$ ($1<i\le \ell$) are part of $Z$, and $f_\ell$ is an endpoint of the core edge; without loss of generality, assume that $f_\ell=f$.
Since the out-degree of each fragment is $1$ and the outgoing edge incident to $f$ points to $f'$, it follows that such a path cannot exist.
\end{proof}

Lemma~\ref{lem:frags} suggests a simple way for checking whether $C$ contains a bad cycle. 
We describe the following operations on $\mathcal{F}$. 
It is straightforward to translate these operations to the actual network $G$ via broadcasting and convergecasting along the fragment edges.
The fragment leaders first confirm with their neighboring fragments (in $\mathcal{F}$) whether they have an incident core edge, by waiting for $2\alpha\tilde n$ rounds.
Every fragment leader that does not have an incident core edge simply waits by setting a timer of $t =c_2 \alpha\tilde n$ rounds, for a constant $c_2$, chosen sufficiently large such that a broadcast message can reach every fragment leader in $C$.
If there exists a core edge between $f$ and $f'$, then the leader with the greater ID, say $f$, broadcasts a \texttt{merge!} message, which is forwarded to all fragments in $C$ by ignoring the direction of the inter-fragment edges, and is guaranteed to arrive within $t$ rounds at every fragment leader of $C$. 
Upon receiving this message, every node in $C$ adopts $f$'s ID as its new fragment ID.
Note that all fragments will start the next iteration in the exact same round.

On the other hand, if the fragments form a bad cycle, then, after $t$ rounds, all leaders in $C$ conclude that there is no core edge in $C$.
Lemma~\ref{lem:frags} ensures that there must be a bad cycle. 
Thus, the nodes in $C$ do not receive a \texttt{merge!} message, causing them to output $0$, and terminate at the end of this iteration.

\enlargethispage{\baselineskip}
\paragraph*{Check Size Requirement:}
Eventually, in some iteration, it may happen that the fragment leader  $u_\ell$ receives nil messages from all its children, which means that none of the nodes in the fragment has any unexplored edges left.
(Note that this also includes the special case where a node does not have any incident edges of $T$.)
In that case, $u_\ell$ initiates counting the number of nodes in the fragment via a simple broadcast and convergecast mechanism. 
Once the counting process is complete, the root $u_\ell$ outputs $0$ if the fragment contains less than $\frac{\tilde{n}}{\alpha}$ nodes, %
and it disseminates its output to all fragment nodes who in turn output $0$ and terminate. 
Otherwise, $u_\ell$ instructs all nodes to output 1. 

\onlyLong{
\subsection{Proof of Theorem~\ref{thm:mst_upper}} 

The correctness of merging fragments as well as broadcasting and converge casting in the individual fragments is straightforward since the network is synchronous, all nodes are awake initially, and there is a global clock. 
We refer the reader to \cite{peleg2000distributed} for a detailed analysis of the algorithm of  \cite{gallager1983distributed} in the synchronous model.
Here, we restrict our attention to the relevant modifications and the claimed bound on the message complexity.

We first show that the algorithm satisfies the $\OneDetects$ requirement when nodes are given an $\alpha$-approximation of the network size.
Consider the case that $T$ is indeed a spanning tree of $G$.
We claim that the verification algorithm will arrive at the same tree  and that all nodes output $1$.
As we only consider edges from $T$ when adding outgoing edges to a fragment and $T$ is acyclic, it is clear that nodes never detect any cycles when growing fragments.
This means that any component $C$ of the fragment graph $\mathcal{F}$ (defined in Section~\ref{subsec: ST_algo}) consists of a directed tree of fragments and, by Lemma~\ref{lem:frags}, we know that there is a single core edge between two fragments $f$ and $f'$ in $C$.
According to the algorithm, either $f$ or $f'$ will broadcast a \texttt{merge!} message that is forwarded to all fragment leaders in $C$.
Since all fragment leaders wait for a number of rounds that is sufficient for them to receive a potentially arriving \texttt{merge!} message, they receive this message before starting the next iteration, which ensures that all fragments in $C$ merge and continue.

Furthermore, as $T$ contains all nodes of $G$, it follows that we eventually obtain a single fragment. 
Hence, the \textit{Check Size Requirement} of the algorithm will be satisfied, which proves the claim. 

Now suppose that $T$ is \emph{not} a spanning tree of $G$. 
We distinguish two cases:
\begin{enumerate}
\item $T$ contains a cycle $Z$:
First, recall from the algorithm description that a cycle in the same fragment results in an \texttt{illegal} message that is forwarded to the fragment leader, prompting all nodes in the fragment to terminate.

Next, consider the case where the cycle $Z$ spans multiple fragments.
Since the fragments grow and merge by exploring edges of $T$, eventually, there must exist a fragment $F$ that contains all but one edge $e$ of $Z$, and the next edge that is chosen to be explored by some node in $F$ is $e$. 
Let $C$ be the component of the fragment graph $\mathcal{F}$ (defined in Section~\ref{subsec: ST_algo}) that contains $F$. 
According to Lemma~\ref{lem:frags}, the subgraph of $\mathcal{F}$ formed by the fragments of $C$ cannot contain any core edge, and hence every fragment leader will conclude (after its timeout of $t$ rounds expires) that there is indeed a cycle.
This means that the algorithm fails the \textit{Acyclicity Check}, and all nodes that are in some fragment in $C$ output $0$.

\item $T$ does not span $G$:
This means that the labeled edges induce at least two disjoint subgraphs $S$ and $S'$; without loss of generality, assume that $|S| \le |S'|$.
Since the nodes in $S$ faithfully execute the algorithm, at least one of the nodes will output $0$ if $S$ contains any cycle (as argued above). 
Given that 
\begin{align*}
  |S| = n - |S'| \le n - |S|,
\end{align*}
it follows that
$
|S| \le \frac{n}{2}. 
$
Moreover, by assumption $\alpha < \sqrt{2}$, and thus
\begin{align}
 \frac{\tilde{n}}{\alpha} \ge\frac{{n}}{\alpha^2} > \frac{n}{2} \ge |S|.  \label{eq:mst_upper1}
\end{align}
Thus, at the point when all nodes in $S$ are in the same fragment $F$ and there are no more unexplored edges of $T$ incident to nodes in $F$, the fragment leader will check if $|S| \ge \frac{\tilde{n}}{\alpha}$. Inequality \eqref{eq:mst_upper1} provides the necessary contradiction.
\end{enumerate}

(The intuition behind \eqref{eq:mst_upper1} is that, when allowing $\alpha \ge \sqrt{2}$, it would be possible to arrive at the case where all fragments have no unexplored edges and each of the fragments would have a  size of at least $\frac{\tilde{n}}{\alpha}$; hence, none of them would be able to detect that $T$ is disconnected.) %

Finally, to see that the algorithm satisfies $\AllDetect$ when $\alpha = 1$, note that if $T$ forms multiple components, then the \textit{Size Requirement} check will fail for all fragments. 
Thus, $T$ must span $G$ and it follows along the lines of the above analysis that we eventually obtain a single fragment $F$ containing all nodes in $T$. 
Clearly, if $T$ contains a cycle, at least one node in $F$ will notice and inform all other nodes in $G$ via the fragment leader.

Next, we analyze the message complexity of the algorithm. 
Since each fragment forms a tree, all broadcast and convergecast communication between the fragment leaders and the other members of the fragments incurs at most $O(n)$ messages. 
In contrast to \cite{gallager1983distributed} and the state of the art algorithm under the $\ktzero$ assumption of \cite{DBLP:journals/jacm/Elkin20}, our algorithm avoids the communication-expensive operation of exchanging the fragment IDs between nodes.
Instead, when growing a fragment, we select an arbitrary edge $e$ from $T$ that is incident to some vertex of the fragment, and if $e$ closes a cycle, the fragment leader is informed and all nodes in the fragment terminate, i.e., stop sending messages.
Hence the \emph{communication graph}, which is the graph induced by the messages of the algorithm on the vertices of $G$ has at most one cycle per fragment.
Recall that each iteration takes $O(\alpha\tilde{n})=O(n)$ rounds.  
This ensures that every fragment that can merge, merges with at least one other fragment in each iteration, and hence a standard analysis proves that the number of fragments reduces by half, i.e., there are $O(\log n)$ iterations in total.
Each iteration may involve at most $O(n)$ messages due to broadcast/convergecast communication between the fragment leaders and the nodes in their fragments, and the broadcast of possible \texttt{merge!} messages.
It follows that the overall message complexity is $O(n\log n)$.
This completes the proof of Theorem~\ref{thm:mst_upper}.
}
\section{Algorithms for Verifying a $d$-Approximate BFS Tree} \label{sec:bfs_algo}

We now turn our attention to the $d$-approximate BFS tree verification problem. 
The following lemma suggests that we can extend the algorithm for ST verification described in Section~\ref{sec:st_algo} by inspecting the neighborhood of the root when considering a sufficiently large stretch $d$:

\begin{lemma}\label{lem:large_stretch}
Let $T$ be a spanning tree of $G$ with root $r$. Let $d \geq \frac{n-1}{x+1}$, for some integer $x \ge 1$. 
If $\dist_T(r,u) > d \cdot \dist_G(r,u)$ for some node $u$ in $G$, then $\dist_{G}(r,u) \le x$.
\end{lemma}

\begin{proof}
Consider a node $u$ such that  $\dist_T(r,u) > d \cdot \dist_G(r,u)$. It holds that
\begin{align}
    \dist_{G}(r,u) < \frac{\dist_{T}(r,u)}{d} %
                   \le \frac{n-1}{d} %
                   \le x+1,\notag 
\end{align}
and thus $\dist_{G}(r,u) \le x$. 
\end{proof}

\begin{theorem} \label{thm:bfs_algo}
Consider the $\kt_\rho$ assumption, for any $\rho \ge 0$.
If $d \geq \frac{n-1}{\max\{2,\rho+1\}}$, there exists a deterministic algorithm for $d$-approximate BFS verification that satisfies $\OneDetects$ with a message complexity of $O(n \log n)$ and a time complexity of $O(n\log n)$ rounds, assuming that nodes are given an $\alpha$-approximation of the network size, for some $\alpha<\sqrt{2}$.
If nodes have perfect knowledge of the network size, the algorithm ensures $\AllDetect$.
\end{theorem}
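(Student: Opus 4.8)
The plan is to reduce $d$-approximate BFS verification to spanning-tree verification plus a purely \emph{local} check in a small ball around the root. Write $x := \max\{2,\rho+1\}-1 = \max\{1,\rho\}\ge 1$, so that the hypothesis becomes $d\ge\frac{n-1}{x+1}$, which is exactly the form needed by Lemma~\ref{lem:large_stretch}. The point of that lemma is that any potential stretch violation $\dist_T(r,u) > d\cdot\dist_G(r,u)$ must occur at a node $u$ with $\dist_G(r,u)\le x$; hence we never need to compute shortest-path distances in $G$ beyond distance $x=\max\{1,\rho\}$ from the designated root $r$, and the expensive part of BFS verification disappears.

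First I would run the spanning-tree verification algorithm of Theorem~\ref{thm:mst_upper} on the undirected edge set of $T$ (a $\ktzero$ algorithm, so it also runs under $\kt_\rho$); this already costs $O(n\log n)$ messages and rounds and yields $\OneDetects$, and $\AllDetect$ when $n$ is known exactly. In parallel, using only $O(n)$ messages over the edges of $T$, I would check that the parent/child labelling forms a consistent rooting at a \emph{unique} root $r$ (every non-root has exactly one parent that lists it as a child, and the parent pointers are acyclic). Assuming these pass, $r$ broadcasts its ID down $T$ with a hop counter, so that in $O(\mathrm{depth}(T))=O(n)$ rounds and $O(n)$ messages every node $u$ learns $\dist_T(r,u)$ together with $r$'s ID.

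Then comes the local stretch check. For $\rho\ge1$, every node $u$ with $\dist_G(r,u)\le\rho$ can read off $\dist_G(r,u)$ exactly from its $\rho$-neighbourhood under $\kt_\rho$ (it recognises $r$'s ID there), so it simply tests $\dist_T(r,u)\le d\cdot\dist_G(r,u)$ and rejects on failure; by Lemma~\ref{lem:large_stretch} this set of nodes contains every violating node, so soundness follows. For $\rho=0$ we have $x=1$, so a violating node must be a $G$-neighbour of $r$; to let these nodes observe $\dist_G(r,u)=1$, $r$ additionally sends an ``I am the root'' message over each of its at most $n-1$ ports in a single round (an $O(n)$-message overhead), after which each $G$-neighbour of $r$ runs the same test. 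A node detecting no violation accepts; for $\AllDetect$ I would finish with one convergecast-then-broadcast of a single reject bit over $T$, which is legitimate because $T$ is by then known to be spanning.

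The routine parts are the correctness argument (legal inputs pass every phase because there is no violation at all; illegal inputs are caught either in the spanning-tree/rooting phase or, via Lemma~\ref{lem:large_stretch}, by the local check) and the complexity bookkeeping ($O(n\log n)$ messages and rounds throughout). The one genuinely delicate point is the $\rho=0$ case: under $\ktzero$ the neighbours of $r$ do not know they are adjacent to $r$, so some communication from $r$ is unavoidable, and the crux is the observation that announcing over all of $r$'s ports costs only $\deg(r)\le n-1$ messages rather than $\Theta(m)$, which keeps us within budget. Everything else is a careful invocation of Theorem~\ref{thm:mst_upper} and Lemma~\ref{lem:large_stretch}.
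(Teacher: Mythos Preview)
Your proposal is correct and follows essentially the same approach as the paper: run the spanning-tree verification of Theorem~\ref{thm:mst_upper}, broadcast the root's identity and hop-distances along $T$, and then use Lemma~\ref{lem:large_stretch} to confine the stretch check to the $x$-ball around the root, where $x=\max\{1,\rho\}$. The paper phrases the last step as the root ``contacting all nodes within distance $\rho$'' (and, for $\rho\le 1$, explicitly messaging all its $G$-neighbours), whereas you let each node in that ball perform the test locally from its $\kt_\rho$ view once it learns $r$'s ID; these are equivalent implementations of the same idea, and your case split $\rho=0$ versus $\rho\ge 1$ is just a minor re-grouping of the paper's $\rho\in\{0,1\}$ versus $\rho\ge 2$.
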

We point out that there is no hope of getting $O(n\log n)$ messages for  significantly smaller values of $d$, as the lower bound in Theorem~\ref{thm:ktx_bfs_lb}  holds for any 
$d \le \frac{(1-\beta)n}{4\rho-2}$ where  $0<\beta<1$ is a constant.

\begin{proof}%
First consider the case $\rho \in \set{0,1}$.
Instantiating Lemma \ref{lem:large_stretch} with $x=1$, tells us that
we only need to check if $T$ is a spanning tree and that all edges incident to the root are in $T$ in order to verify if a subgraph $T$ is a $d$-approximate BFS. 
More concretely, after executing the spanning tree verification, each node computes its distance in $T$ from the root. 
Then the root directly contacts all its neighbors (in $G$):
If there is a node at distance at least $\frac{n-1}{2}$ from the root (in $T$) who is not a neighbor of the root, it outputs $0$, and broadcasts a \texttt{fail} message to all nodes in $T$; otherwise, it broadcasts an \texttt{accept} message.
Each node in $T$ decides accordingly once it receives this message from its parent.

The argument for the case $\rho \ge 2$ is similar, except that we now instantiate Lemma~\ref{lem:large_stretch} with $x=\rho$.
That is, since the root knows its $\rho$-hop neighborhood, it can contact all nodes within distance $\rho$ (in $G$) by using only $O(n)$ messages. 
In more detail, the root locally computes a spanning tree $T_\rho$ of its $\rho$-neighborhood, and perform a standard flooding algorithm on $T_\rho$.  
\end{proof}

\section{Discussion and Open Problems}
In this paper, we study the message complexity of ST verification, MST verification and $d$-approximate BFS verification distributed algorithms.
To the best of our knowledge, the message complexity of $d$-approximate BFS verification distributed algorithms has never been studied before, hence, we focus our discussion on this problem. 
In our study, we show that the message complexity is largely determined by the stretch $d$. 
When $d$ is small, we obtain a message complexity lower bound of $\Omega(n^2)$ for $\ktzero$ and $\ktone$, and a lower bound of $\Omega\left(\frac{1}{\rho}\left(\frac{n}{\rho}\right)^{1+\frac{c}{\rho}}\right)$ (for some constant $c >0$) for $\ktx$ where $\rho > 1$. 
The bound of $d$ is almost tight for $\ktzero$ model, but not for $\ktx$. 
In particular, for $\ktx$ model where $\rho > 1$, it is still open whether we can match the lower bound that holds for $d < \frac{n-1}{\max\set{2, \rho+1}}$. 

In addition, all the bounds we obtain for $\ktx$ where $\rho \ge1$ are restricted to comparison-based algorithms, while the bound for $\ktzero$ holds for general algorithms. 
This gives rise to the following important unanswered question: Can the lower bound of $\Omega(n^2)$ for $\ktone$ be improved by using non-comparison based algorithms?


\end{document}